\newtheorem{theorem}{Theorem}
\newtheorem{observation}[theorem]{Observation}
\newtheorem{claim}[theorem]{Claim}
\newcommand\veclength[1]{{\left\|#1\right\|}}
\newcommand\Red{H}
\newcommand\Blue{P}
\newcommand\True{\textrm{true}}
\newcommand\False{\textrm{false}}
\newcommand\defn[1]{\textbf{\textit{\boldmath #1}}}
\title{Computational Complexity of \\ Flattening Fixed-Angle Orthogonal Chains%
  \thanks{A preliminary version of this paper was presented at the
    34th Canadian Conference on Computational Geometry.}}
\author{
  Erik D. Demaine%
    \thanks{Computer Science and Artificial Intelligence Laboratory, Massachusetts Institute of Technology, USA. \protect\url{{edemaine,jaysonl}@mit.edu}}
\and
  Hiro Ito%
    \thanks{School of Informatics and Engineering, The University of Electro-Communications, Japan. \protect\url{itohiro@uec.ac.jp}.
    Partially supported by JSPS Kakenhi Grant Number JP20K11671.}
\and
  Jayson Lynch\footnotemark[2]
\and
  Ryuhei Uehara%
    \thanks{School of Information Science, JAIST, Japan. \protect\url{uehara@jaist.ac.jp}.
 Partially supported by JSPS Kakenhi Grant Number JP18H04091, JP20H05961, JP20H05964, JP20K11673, and JP22H01423.}
}
\date{}
\begin{document}

\maketitle

\begin{abstract}
Planar/flat configurations of fixed-angle chains and trees are well
studied in the context of polymer science, molecular biology, and puzzles.
In this paper, we focus on a simple type of fixed-angle linkage:
every edge has unit length (equilateral),
and each joint has a fixed angle of $90^\circ$ (orthogonal) or $180^\circ$ (straight).
When the linkage forms a path (open chain),
it always has a planar configuration, namely the zig-zag
which alternating the $90^\circ$ angles between left and right turns.
But when the linkage forms a cycle (closed chain),
or is forced to lie in a box of fixed size,
we prove that the flattening problem ---
deciding whether there is a planar noncrossing configuration ---
is strongly NP-complete.

Back to open chains,
we turn to the Hydrophobic--Hydrophilic (HP) model of protein folding,
where each vertex is labeled H or P, and the goal is to find a folding that
maximizes the number of H--H adjacencies.
In the well-studied HP model, the joint angles are not fixed.
We introduce and analyze the fixed-angle HP model,
which is motivated by real-world proteins.
We prove strong NP-completeness of
finding a planar noncrossing configuration of a fixed-angle
orthogonal equilateral open chain with the most H--H adjacencies,
even if the chain has only two H vertices.
(Effectively, this lets us force the chain to be closed.)
\end{abstract}

\paragraph{Keywords:} Computational origami, equilateral linkage, fixed-angle linkage, HP model, NP-completeness, orthogonal linkage

\section{Introduction}

In this paper, we introduce and investigate a new model of protein folding.
We are given an \defn{equilateral fixed-angle chain} (``protein''),
where each vertex is marked H or P and has a specified fixed angle,
and edges all have unit length.
The goal is to embed the chain into a given grid
(e.g., 2D square, 3D cube, 2D triangular, or 2D hexagonal) while
\begin{enumerate}
\item respecting the fixed angles
  (but each angle is still free to be a left or right turn in 2D or spin in 3D);
\item avoiding self-crossing in the embedding; and
\item maximizing the number of H--H grid adjacencies.
\end{enumerate}
This is a fixed-angle version of the well-studied HP model of protein folding 
(where the angles are normally free to take on any value), 
which is known to be NP-hard in the 2D square grid \cite{HP-hard-2D}
and 3D cube grid \cite{HP-hard-3D}.
Fixed angles are motivated by real-world proteins;
see \cite[Chapters 8--9]{DO07}.
In the 2D square grid or 3D cube grid studied here, we can restrict to \defn{orthogonal} fixed-angle chains
where all fixed angles are $90^\circ$ or $180^\circ$.
For example, the popular ``Tangle'' toy restricts further to
all fixed angles being $90^\circ$; see \cite{DDHLTU2015}.

In the 3D cube grid, NP-hardness of fixed-angle HP protein folding
follows from \cite{AbelDemaineDemaineEisenstatLynchSchardl2013} 
which proves NP-hardness of embedding a fixed-angle orthogonal equilateral
chain of $n^3$ vertices into an $n \times n \times n$ 3D cube grid.
If we make all vertices Hs, then a cube embedding is the best way to maximize H--H adjacencies, 
as the cube uniquely minimizes surface area where potential adjacencies are lost.

In this paper, we prove that the fixed-angle HP protein folding problem
is NP-hard in the 2D square grid,
even if the chain has only two H vertices and those vertices are its endpoints.
In other words, given a fixed-angle orthogonal equilateral HP chain,
we prove it is strongly NP-hard to find any planar noncrossing embedding
where the endpoints (the two H vertices) are adjacent.
This result is tight in the sense that any fixed-angle orthogonal equilateral 
chain with fewer than two H vertices (and hence can have no H--H adjacencies)
has a noncrossing embedding, given by zig-zagging the $90^\circ$ angles
to alternate between left and right turns.

Fixed-angle HP protein folding where only the two endpoints are H vertices
is nearly equivalent to finding any planar noncrossing embedding of a
\defn{closed} fixed-angle chain
(where the first and last vertex are identified, 
and vertices are no longer marked H or P).
This is called the \defn{flattening problem} for fixed-angle closed chains.
The only difference is that, in the flattening problem, the first/last vertex
has a fixed-angle constraint, whereas in the HP model,
the two necessarily adjacent H vertices could form any angle.

Nonetheless, we show that the flattening problem for fixed-angle orthogonal
equilateral closed chains is strongly NP-complete.
Past work proved strong NP-hardness when this problem was generalized to
fixed-angle orthogonal equilateral caterpillar tree (instead of a chain)
or when we allow nonorthogonal fixed angles (and working off-grid) \cite{DemaineEisenstat2011}, but left this case open.

Finally our work also addresses two open problems from \cite{AbelDemaineDemaineEisenstatLynchSchardl2013}.
We solve one open problem by proving strong NP-completeness of deciding whether
a given fixed-angle orthogonal equilateral chain can be packed into a 2D square
(whereas \cite{AbelDemaineDemaineEisenstatLynchSchardl2013} proved an analogous result for a 3D cube).
We also prove that this problem remains NP-complete when
the chain is only a constant factor longer than the side length of the square
(and thus the square is sparsely filled), answering the 2D analog of a
3D question from \cite{AbelDemaineDemaineEisenstatLynchSchardl2013}.

\section{Preliminaries}

\subsection{Linkages}

A \defn{linkage} consists of a \defn{structure graph} $G=(V,E)$ and
edge-length function $\ell:E\rightarrow \mathbb{R}^+$.
A \defn{configuration} of a linkage in 2D is a mapping $C:V\rightarrow \mathbb{R}^2$ satisfying
the constraint $\ell(u,v)=\veclength{C(u)-C(v)}$ for each edge $\{u,v\}\in E$.
Let $x(C(u))$ and $y(C(u))$ be the $x$- and $y$-coordinate of $C(u)$, respectively.
A configuration is \defn{noncrossing} if any two edges $e_1,e_2\in E$ intersect only at a shared vertex $v \in e_1\cap e_2$.

A linkage is \defn{equilateral} if $\ell(e)=1$ for every $e\in E$.
A linkage with $n$ vertices is an \defn{open chain} if its structure graph $G$ is a path $(v_0,v_1,\ldots,v_{n-1})$,
and it is a \defn{closed chain} if $G$ is a cycle $(v_0,v_1,\ldots,v_{n-1},v_n=v_0)$.
A \defn{fixed-angle chain} is a chain together with
an angle function $\theta:V\rightarrow [0^\circ,180^\circ]$,
constraining configurations to have an angle of $\theta(v)$ at
every vertex $v$, except for the two endpoints of an open chain.
A fixed-angle chain is \defn{orthogonal} if we have $\theta(v_i) \in \{90^\circ,180^\circ\}$ for every vertex~$v_i$ with $0<i<n-1$.
A \defn{segment} of a fixed-angle chain
is a consecutive subchain $v_i, v_{i+1}, \dots, v_k$
with intermediate flat angles $\theta(v_j) = 180^\circ$ for $i < j < k$,
which acts the same as a single edge of length equal to the sum
($k-i$ for equilateral chains).

\label{sec:NP}
The \defn{embedding problem} asks to determine whether
a given linkage has a noncrossing configuration in 2D.
For general linkages, this problem is $\exists \mathbb R$-complete \cite{Kempe_SoCG2016}.
For fixed-angle orthogonal chains, the problem is in NP:
given a binary choice of turning left or right at each vertex,
we can construct an explicit embedding ---
placing the first vertex at the origin and the second vertex
on the positive $x$ axis, and adding and subtracting lengths
to the $x$ and $y$ coordinates --- and check for collisions and
(for closed chains) closure.
In fact, for fixed-angle orthogonal \emph{open} chains,
every instance is a ``yes'' instance:
\begin{observation}\label{obs:chain}
  Every fixed-angle orthogonal open chain has a noncrossing configuration.
\end{observation}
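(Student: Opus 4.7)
The plan is to exhibit the specific noncrossing configuration flagged in the abstract --- a zig-zag staircase --- and verify it is noncrossing by a simple monotonicity argument.

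First I would decompose the chain into its maximal segments $s_1, \ldots, s_m$ at the $90^\circ$ vertices, using the paper's definition of a segment, so that each $s_i$ acts as a single edge of positive integer length $\ell_i$ (absorbing any $180^\circ$ runs between consecutive $90^\circ$ joints). Between $s_i$ and $s_{i+1}$ sits exactly one $90^\circ$ joint, at which I am free to choose left or right, and since the chain is open, the endpoints impose no angular constraint, so I am also free to orient $s_1$. I would place $s_1$ at the origin along $+x$ and alternate the turn choices $L, R, L, R, \ldots$ at the successive joints. A direct check shows that under this alternation the direction of $s_i$ depends only on the parity of $i$: odd-indexed segments all point along $+x$ and even-indexed segments all point along $+y$.

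To verify that this configuration is noncrossing, I would let $(X_i, Y_i)$ denote the vertex at the end of $s_i$ and note that $X_i = \sum_{j \leq i,\, j \text{ odd}} \ell_j$ and $Y_i = \sum_{j \leq i,\, j \text{ even}} \ell_j$, so both coordinates are nondecreasing and strictly increase on parity-appropriate steps. Two non-adjacent horizontal (odd-indexed) segments then lie on distinct horizontal lines, because the intervening vertical segments contribute at least $\ell \geq 1$ to the $Y$ sum; symmetrically for vertical segments; and a non-adjacent horizontal--vertical pair is separated in at least one coordinate by an intermediate segment of length $\geq 1$. Adjacent perpendicular segments meet only at their shared joint by construction. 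This rules out every self-intersection.

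There is no real obstacle here --- the argument amounts to the familiar fact that a strictly monotone staircase path is simple. The only subtle point worth flagging is that the equilateral hypothesis (after absorbing flat runs into segments of integer length $\geq 1$) is precisely what forces the staircase to advance strictly at each parity-appropriate step, thereby preventing any coincidence between non-adjacent segments.
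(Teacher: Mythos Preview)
Your proof is correct and follows essentially the same zig-zag/monotonicity argument as the paper (which works vertex by vertex rather than segment by segment, but arrives at the identical configuration and invokes the same ``both coordinates nondecreasing, one strictly increasing'' reasoning). One small correction to your closing remark: the statement does not assume equilaterality, and neither the paper's argument nor yours actually needs it --- positive edge lengths already make every segment length strictly positive, so the staircase advances strictly regardless; the paper in fact notes this explicitly right after the proof.
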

\begin{proof}
Intuitively, we embed the chain in a zig-zag.
Precisely, let $P=(v_0,v_1,\ldots,v_{n-1})$ be the path structure graph.
First we put $v_0$ at $(0,0)$, and $v_1$ at $(1,0)$.
For each $i=2,3,\ldots,n-1$, we define $x(C(v_i))$ and $y(C(v_i))$ as follows.
When $\theta(v_i)=180^\circ$, we have no choice: 
$x(C(v_i))=x(C(v_{i-1}))+(x(C(v_{i-1}))-x(C(v_{i-2})))$ and $y(C(v_i))=y(C(v_{i-1}))+(y(C(v_{i-1}))-y(C(v_{i-2})))$.
When $\theta(v_i)=90^\circ$ and $\overrightarrow{C(v_{i-2})C(v_{i-1})}$ is horizontal, 
we define $x(C(v_i))=x(C(v_{i-1}))$ and $y(C(v_i))=y(C(v_{i-1}))+1$. 
If it is vertical, we define $x(C(v_i))=x(C(v_{i-1}))+1$ and $y(C(v_i))=y(C(v_{i-1}))$.
The obtained configuration is noncrossing because it proceeds monotonically
in $x$ and $y$, with strict increase in one of the coordinates.
\end{proof}
We note that Observation \ref{obs:chain} holds for any fixed-angle orthogonal open chain 
which is not necessarily equilateral.

In the \defn{HP model}, the structure graph $G=(V,E)$ has its vertices
\defn{bicolored} by a color function $\omega:V\rightarrow\{\Red,\Blue\}$.
For a configuration $C$ of an equilateral orthogonal linkage,
a pair $(u,v)$ of vertices forms an \defn{H--H contact}
if $\omega(u) = \omega(v) = \Red$, $\veclength{C(u)-C(v)}=1$, and $\{u,v\} \notin E$.
The \defn{HP optimal folding problem} of a bicolored fixed-angle orthogonal
equilateral chain asks to find a noncrossing configuration of the linkage in 2D
that maximizes the number of H--H contacts.

\subsection{Linked Planar 3SAT}

Our reductions are from a variant of 3SAT, which we define and prove NP-hard
in this section.

In the standard \defn{3SAT} problem,
we are given a formula $\phi$ over a set $V$ of $n$ variables,
where $\phi$ is a conjunction of a set $C$ of $m$ \defn{clauses},
where each clause in $C$ is a disjunction of three \defn{literals},
where each literal is of the form $x$ or $\neg x$
for some variable $x \in V$.
In \defn{planar 3SAT}, we form the graph $G_\phi=(C\cup V,E)$
with a vertex for each variable in $V$ and each clause in $C$,
and edges between variables and the clauses that contain them,
and require that $G_\phi$ has a planar embedding.

We mimic a variant of planar 3SAT with additional planarity restrictions:
if we add edges to form a Hamiltonian cycle $\kappa$ of $C \cup V$ that
first visits all clauses in $C$ in some order $c_1,c_2,\dots,c_m$, and
then visits all variables in $V$ in some order $v_1,v_2,\dots,v_n$,
then the resulting graph $G'_\phi = G_\phi \cup \kappa$ must also be planar,
as in \figurename~\ref{fig:sat}.
The \defn{linked planar 3SAT problem} asks,
given $\phi$, $G_{\phi}$, $\kappa$, and a planar embedding of
$G_\phi \cup \kappa$, whether $\phi$ is satisfiable.
Pilz \cite{Pilz2018} proved this problem NP-complete.
Our proof follows the same high-level structure of this proof,
so we briefly review it now:

\begin{figure}[th]\centering
\begin{minipage}[t]{0.55\linewidth}\centering
\includegraphics[width=0.8\linewidth]{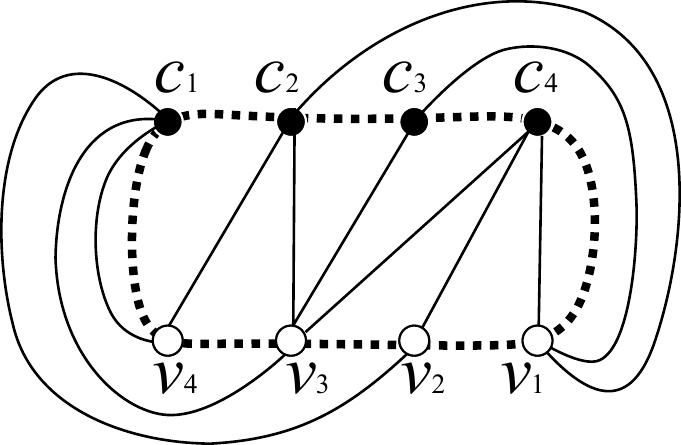}
\caption{An example instance of linked planar 3SAT,
where $c_1=(\neg v_2\vee \neg v_3\vee\neg v_4)$, $c_2=(v_4\vee v_3\vee \neg v_1)$,
$c_3=(\neg v_3\vee v_1)$, and $c_4=(v_1\vee v_2\vee v_3)$.
Hamiltonian cycle $\kappa$ (drawn dotted) visits $c_1,c_2,c_3,c_4,v_1,v_2,v_3,v_4$ in cyclic order.}
\label{fig:sat}
\end{minipage}
\hfil\hfill
\begin{minipage}[t]{0.35\linewidth}\centering
\includegraphics[width=0.8\linewidth]{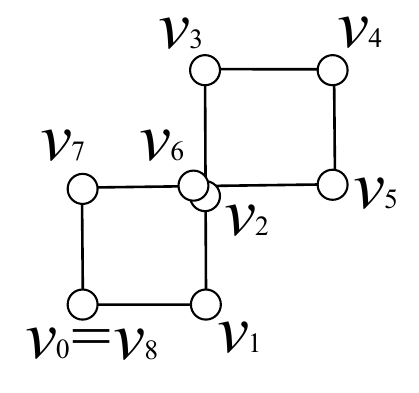}
\caption{A crossing chain $(v_0,v_1,v_2,v_3,v_4,v_5,v_6,v_7,v_8=v_0)$ with angles 
$\theta(v_2)=\theta(v_6)=180^\circ$ and $\theta(v_i)=90^\circ$ for $i=0,1,3,4,5,7$.}
\label{fig:cross}
\end{minipage}
\end{figure}

\begin{theorem} \label{thm:linked}
  Linked planar 3SAT is NP-complete.
\end{theorem}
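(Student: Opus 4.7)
My plan is to show NP-hardness by reduction from ordinary planar 3SAT, which is known NP-complete. Membership in NP is immediate, since a satisfying assignment is a polynomial-size certificate verifiable in polynomial time. For hardness, given a planar 3SAT instance $\phi$ together with a planar embedding of $G_\phi$, I will construct an equivalent linked planar 3SAT instance $\phi'$ whose incidence graph $G_{\phi'}$, augmented by a clauses-then-variables Hamiltonian cycle $\kappa$, is still planar.

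The key gadget is \emph{variable duplication}. For each variable $v$ occurring in $k$ literals of $\phi$, I replace $v$ by $k$ fresh copies $v^{(1)}, \ldots, v^{(k)}$ and add a cyclic chain of clauses encoding the equivalences $v^{(i)} \leftrightarrow v^{(i+1)}$ (padded to 3-clauses with auxiliary variables as needed). This preserves satisfiability and makes each copy carry exactly one original literal occurrence, so every variable vertex in $G_{\phi'}$ has constant clause-degree. Starting from a standard rectilinear planar embedding of $G_\phi$ (variables on a horizontal line, clauses fanning out above and below with rectilinear wires), I replace each variable by its ring gadget, arranging the ring so that the literal-occurrence copies appear around the rim in the cyclic order dictated by the original embedding. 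The equivalence clauses sit on the rim between consecutive copies, and each original clause still attaches to three distinct rim copies. Finally, I add the Hamiltonian cycle $\kappa$: it follows a single face-following path visiting all clauses---original and equivalence---in the cyclic order induced by the outer boundary, then follows a face-following path through all variable copies in their induced order, closing up with two bridging edges. Because each new $\kappa$-edge is drawn along an available face of the current embedding, the union $G_{\phi'} \cup \kappa$ remains planar, and $\phi'$ is satisfiable iff $\phi$ is.

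I expect the main obstacle to be arranging the planar embedding so that both the full set of clauses and the full set of variable copies admit compatible linear face orderings---that is, an embedding that places all clauses on one ``side'' of the drawing and all variable copies on another so that $\kappa$ can close up without crossings. Variable splitting is the crucial tool here, since it bounds the number of clauses incident to any one variable-copy vertex and so decouples the otherwise unavoidable interleaving of clause and variable incidences around high-degree variables, freeing up enough outer-face room to route both halves of $\kappa$ simultaneously. The remaining technical work is a local check that every $\kappa$-edge and every equivalence-clause vertex can be placed on an available face of the evolving planar drawing, which is routine once the ring gadgets are laid out consistently with the original rectilinear embedding.
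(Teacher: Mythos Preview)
Your proposal has the right ingredient---variable duplication via equivalence clauses---but it does not actually resolve the step you yourself flag as the main obstacle, and that step is the entire content of the theorem. In your ring gadget, variable copies and equivalence clauses alternate around the rim of each ring; a Hamiltonian cycle that must visit \emph{all} clauses before \emph{any} variable therefore has to hop over every variable copy while traversing the equivalence clauses on each ring, and then come back and thread through all the copies without recrossing the clause path or the original clause--variable incidences. Saying this is ``routine once the ring gadgets are laid out consistently'' is exactly where a proof is needed and none is given. Bounded degree from splitting does not by itself produce a clauses-first/variables-second linear order compatible with planarity; you need a global layout that physically separates the two vertex classes.

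The paper's proof (following Pilz) supplies precisely this missing global structure. The Hamiltonian cycle $\kappa$ is drawn first, as a doubled spiral: the inward strand carries all the clause vertices and the outward strand carries all the variable vertices, so the clauses-then-variables order is built into the curve from the start. This yields a picture with alternating horizontal bands for clauses and variables. The original planar drawing of $G_\phi$ is then laid over this grid, and each variable--clause edge is replaced by a chain of duplication gadgets (two-clause equivalences $x\Leftrightarrow x'$) that hop the variable across successive bands until it reaches its clause. Every new equivalence clause lands on a clause band and every new variable copy lands on a variable band, so $\kappa$ is already planar by construction. The role of variable duplication here is not to bound degree but to bridge between adjacent spiral bands; the spiral, not the splitting, is what makes the clauses-then-variables cycle planar.
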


\begin{proof}
  Pilz gives a reduction from planar 3SAT to linked planar 3SAT
  \cite{Pilz2018}.
  \figurename~\ref{fig:linked-3sat} summarizes the reduction.
  Let $G_\phi$ be the given planar 3SAT instance.
  First we draw the Hamiltonian cycle $\kappa$ (as on the left)
  in a large doubled spiral,
  with the inward spiral being the subpath for clauses
  and the outward spiral being the subpath for variables.
  Focusing on one half (the dotted square), we obtain a square
  with alternating horizontal grid lines for clauses and variables.
  Then we construct a planar drawing of $G_\phi$ with no horizontal edges,
  all variable vertices on odd grid lines (orange/light),
  and all clause vertices on even grid lines (purple/dark).
  This construction correctly orders the vertices on~$\kappa$.
  Finally we replace each edge of this drawing
  with a sequence of gadgets (as on the right)
  that duplicates the variable across the horizontal grid lines
  to reach the clause.
  Although not notated in our figure, each 4-cycle gadget alternates
  between positive and negative literals, so that for each variable $x$
  and copy $x'$ we have the clauses $\neg x \vee x'$ (i.e., $x \Rightarrow x'$)
  and $\neg x' \vee x$ (i.e., $x' \Rightarrow x$), which together imply $x=x'$
  (i.e., $x \Leftrightarrow x'$).
\end{proof}

\begin{figure}[th]\centering
\includegraphics[scale=0.5]{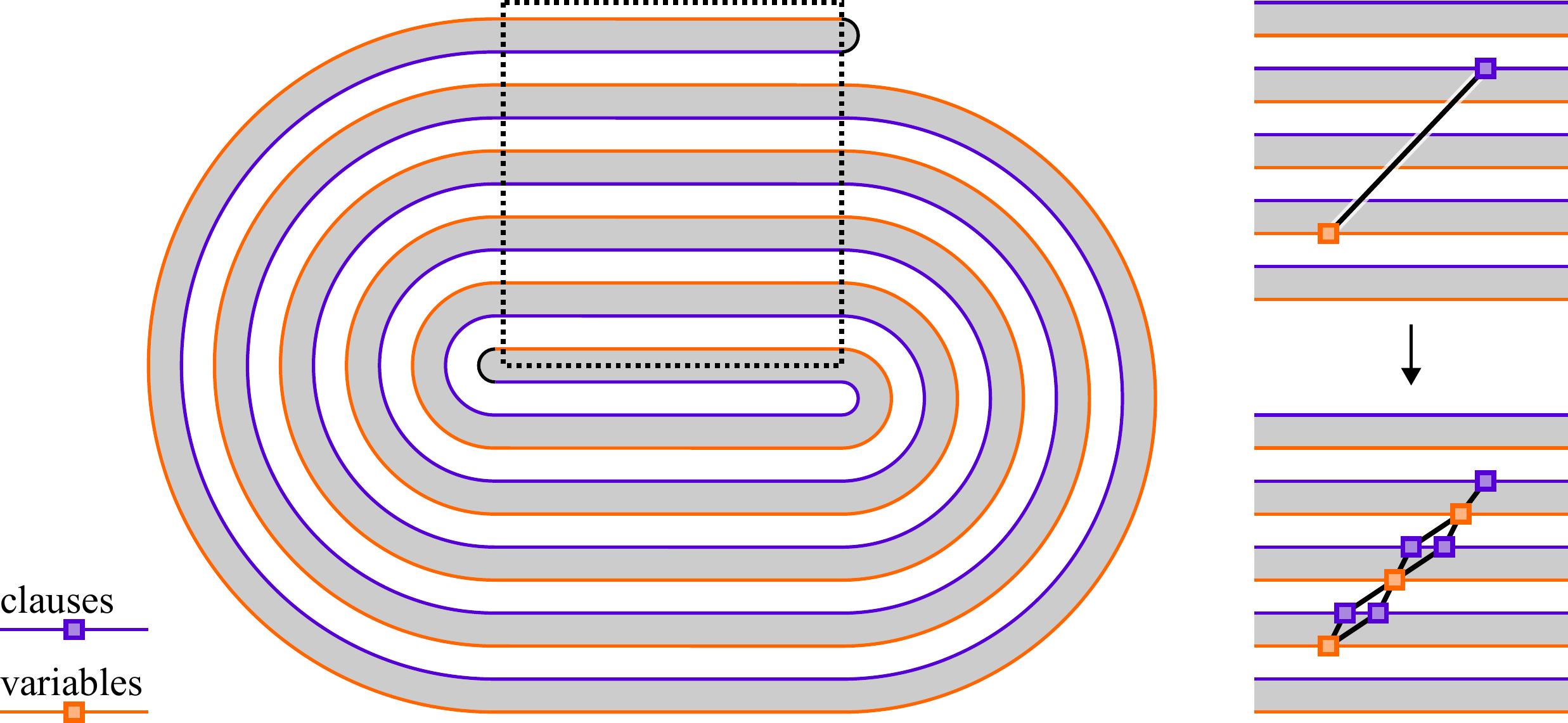}
\caption{Pilz's reduction from planar 3SAT to linked planar 3SAT
  \cite{Pilz2018}, based on Figs.~3 and~4 of~\cite{Pilz2018}.
  (For consistency with later figures, we use colors instead of dashes
  to distinguish variables and clauses, and rotate the figure $90^\circ$.)}
\label{fig:linked-3sat}
\end{figure}

\section{Embedding Fixed-Angle Orthogonal Equilateral Closed Chains is Strongly NP-complete}

In contrast to Observation \ref{obs:chain}, not all fixed-angle orthogonal equilateral \emph{closed} chains
are ``yes'' instances of the embedding problem.
In particular, an orthogonal equilateral closed chain must have an even
number of edges to have a configuration in 2D.
Even with this property, the length-$8$ chain $(v_0,v_1,v_2,v_3,v_4,v_5,v_6,v_7,v_8=v_0)$ with angles 
$\theta(v_2)=\theta(v_6)=180^\circ$ and $\theta(v_i)=90^\circ$ for $i=0,1,3,4,5,7$ has configurations in 2D but 
they have crossings at vertices $v_2$ and $v_6$ (\figurename~\ref{fig:cross}).
It is not difficult to show that the embedding problem for fixed-angle
orthogonal closed chains is weakly NP-hard by a reduction from the
ruler folding problem (see \cite[Chap.~2]{DO07});
this construction requires exponential edge lengths
(or equilateral chains with exponentially long straight segments).
In this section, we prove that the embedding problem is strongly NP-complete:

\begin{theorem}\label{th:loop}
  Embedding a fixed-angle orthogonal equilateral closed chain in 2D
  is strongly NP-complete.
\end{theorem}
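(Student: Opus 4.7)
The problem is in NP by the discussion in Section~\ref{sec:NP}: a certificate is a left-or-right choice at each $90^\circ$ vertex, from which the unique integer-coordinate configuration can be computed (as in the proof of Observation~\ref{obs:chain}) and then checked for self-intersection and closure in polynomial time. So the main work is NP-hardness, and my plan is to reduce from linked planar 3SAT (Theorem~\ref{thm:linked}); this is the natural source because a closed chain already \emph{is} a cycle, which provides a ready-made home for the Hamiltonian cycle~$\kappa$.

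Given the input $(\phi, G_\phi, \kappa)$ drawn as in \figurename~\ref{fig:linked-3sat}, I would build one long closed orthogonal equilateral chain whose ``backbone'' runs along a thickened copy of $\kappa$, visiting a \emph{clause gadget} at each $c_i$ on the clause half and a \emph{variable gadget} at each $v_j$ on the variable half. Each variable gadget contains one distinguished $90^\circ$ joint whose binary turn choice encodes $\True$ vs.~$\False$, with every other internal joint pinned by $180^\circ$ padding or by tight collisions with the surrounding backbone. From each variable gadget I would emit out-and-back \emph{literal arms} (rigid rectilinear detours of alternating $90^\circ$ joints, hence combinatorially frozen once the variable's turn is fixed) that the chain traverses before rejoining the backbone; because $G_\phi \cup \kappa$ is planar by assumption, these arms can be routed to the appropriate clauses without mutual obstruction. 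Each clause gadget is a small orthogonal pocket that the backbone encloses while visiting $c_i$; the pocket has three notches, and I engineer it so that it admits a noncrossing completion whenever at least one of the three arms reaching it is in the $\True$ position, but forces a crossing like that of \figurename~\ref{fig:cross} when all three arms are in the $\False$ position. Negated literals are handled by flipping the orientation of the corresponding arm. Global parity requirements (an orthogonal closed chain needs an even number of edges and net turn $\pm 360^\circ$) are restored by $180^\circ$ padding and by extra zig-zag detours on the backbone, whose embeddings always exist by Observation~\ref{obs:chain}.

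The hardest step I expect is the clause gadget: because all angles are strictly $90^\circ$ or $180^\circ$ and all edges have unit length, I cannot use the continuous metric slack exploited by the nonorthogonal construction of \cite{DemaineEisenstat2011}, so the all-$\False$ obstruction must come from an exact pigeonhole in which the three $\False$ arms together demand one more unit cell of pocket than is available, while any arrangement with at least one $\True$ arm admits an explicit packing. Correctness then follows in two directions: from a satisfying assignment I can assemble an explicit noncrossing embedding gadget-by-gadget along $\kappa$; conversely, any noncrossing embedding of the constructed chain determines a consistent $\True$/$\False$ label at each variable gadget, and no clause's pocket can be embedded unless the corresponding clause is satisfied. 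All gadgets have constant size and $\kappa$ has $O(n+m)$ vertices, so the reduction is polynomial, and all edge lengths are unit (hence polynomial in unary), yielding strong NP-hardness.
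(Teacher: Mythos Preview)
Your high-level plan matches the paper's in spirit (reduce from a planar SAT variant, lay the closed chain along a spiral/cycle, attach variable and clause gadgets), but what you have written is an outline, not a proof: no gadget is actually constructed, and the step you flag as ``hardest'' --- the clause pocket --- is left entirely open. More importantly, the proposal glosses over the real difficulty of this problem, which is \emph{forcing}. Your claim that out-and-back literal arms are ``rigid rectilinear detours of alternating $90^\circ$ joints, hence combinatorially frozen once the variable's turn is fixed'' is not correct: a doubled orthogonal path has an independent left/right choice at every $90^\circ$ vertex, and nothing intrinsic to the arm makes it track the intended route through the planar drawing. The only thing that can pin a joint is a collision with some other part of the chain, and your description explicitly routes the arms through empty space (``without mutual obstruction''). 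Likewise, you assert that the backbone's internal joints are ``pinned by $180^\circ$ padding or by tight collisions with the surrounding backbone,'' but the backbone has nothing surrounding it to collide with. Without a forcing mechanism, an unsatisfiable instance could still admit a noncrossing embedding in which arms wander off harmlessly, so the reverse direction of correctness fails.

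The paper's proof is almost entirely devoted to this forcing issue, and its architecture is quite different from yours. It does \emph{not} route long arms from variables to clauses along edges of~$G_\phi$. Instead it builds (i)~a rigid outer \emph{frame} whose uniqueness is proved by a parity argument modulo~$5$, (ii)~a \emph{spiral} whose segments are forced one by one by the frame's bounding box, (iii)~thick \emph{insulation} rows between every pair of adjacent variable/clause rows, forced by segments taller than anything else in the neighboring rows, and (iv)~communication only through short \emph{tabs} in the insulation that flip up or down. Variable and clause gadgets interact only with the adjacent insulation tabs, never with each other directly; all long-range routing is delegated to hook subgadgets whose foldings are pinned by the frame and the insulation grilles. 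Each of these pieces comes with its own nontrivial uniqueness lemma (Claims~\ref{claim:frame-loop}--\ref{claim:variable}). Your proposal would need analogous machinery before it could be considered a proof.
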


Section~\ref{sec:NP} argued membership in NP.
To show NP-hardness, we mimic the reduction from planar 3SAT to
linked planar 3SAT given by Theorem~\ref{thm:linked}.
In particular, assume we have constructed a formula $\phi$,
the associated graph $G_\phi = (C\cup V,E)$, 
a Hamiltonian path $\kappa$ visiting $c_1,c_2,\ldots,c_m,v_1,v_2,\ldots,v_n$
in cyclic order,
and a planar embedding of $G_\phi \cup \kappa$ with $\kappa$ spiraling
as in \figurename~\ref{fig:linked-3sat}.
Figure~\ref{fig:sample-graph} shows an example of a planar 3SAT
instance and the result from this transformation,
and Figure~\ref{fig:sample} shows the final result of our reduction.

\begin{figure}\centering
\includegraphics[scale=0.5]{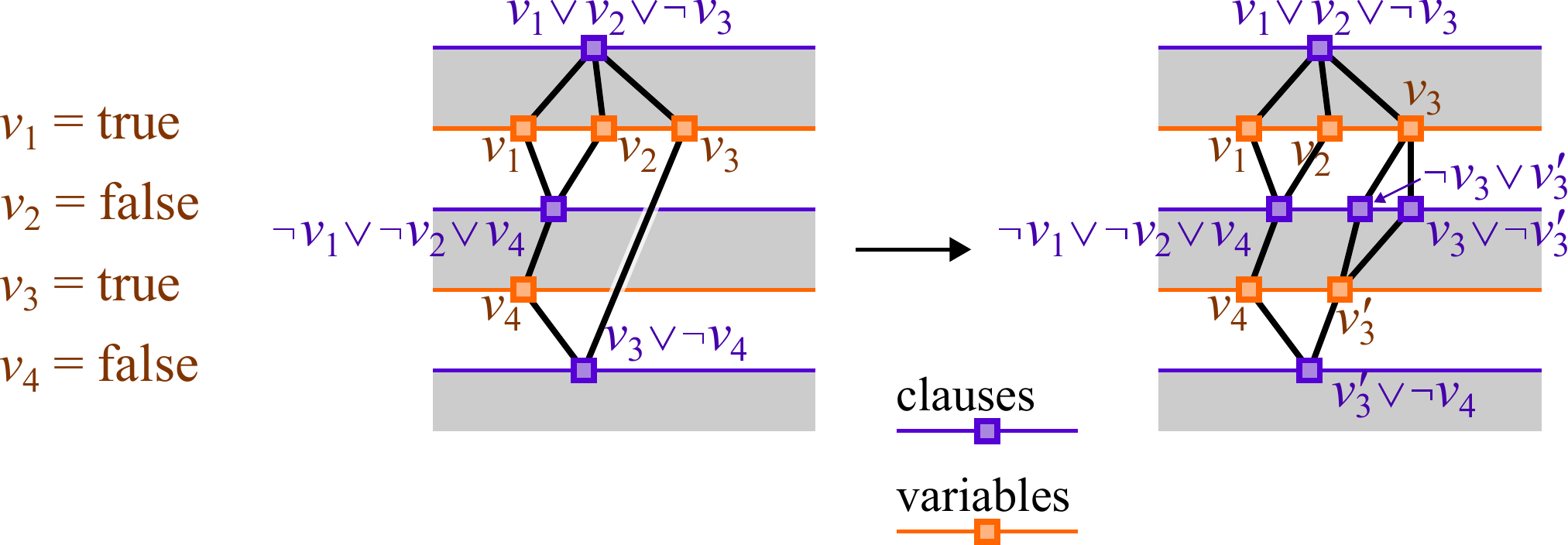}
\caption{Example planar 3SAT instance and the result of the transformation
  from Theorem~\ref{thm:linked}.}
\label{fig:sample-graph}
\end{figure}

\begin{figure}\centering
\includegraphics[width=0.77\linewidth]{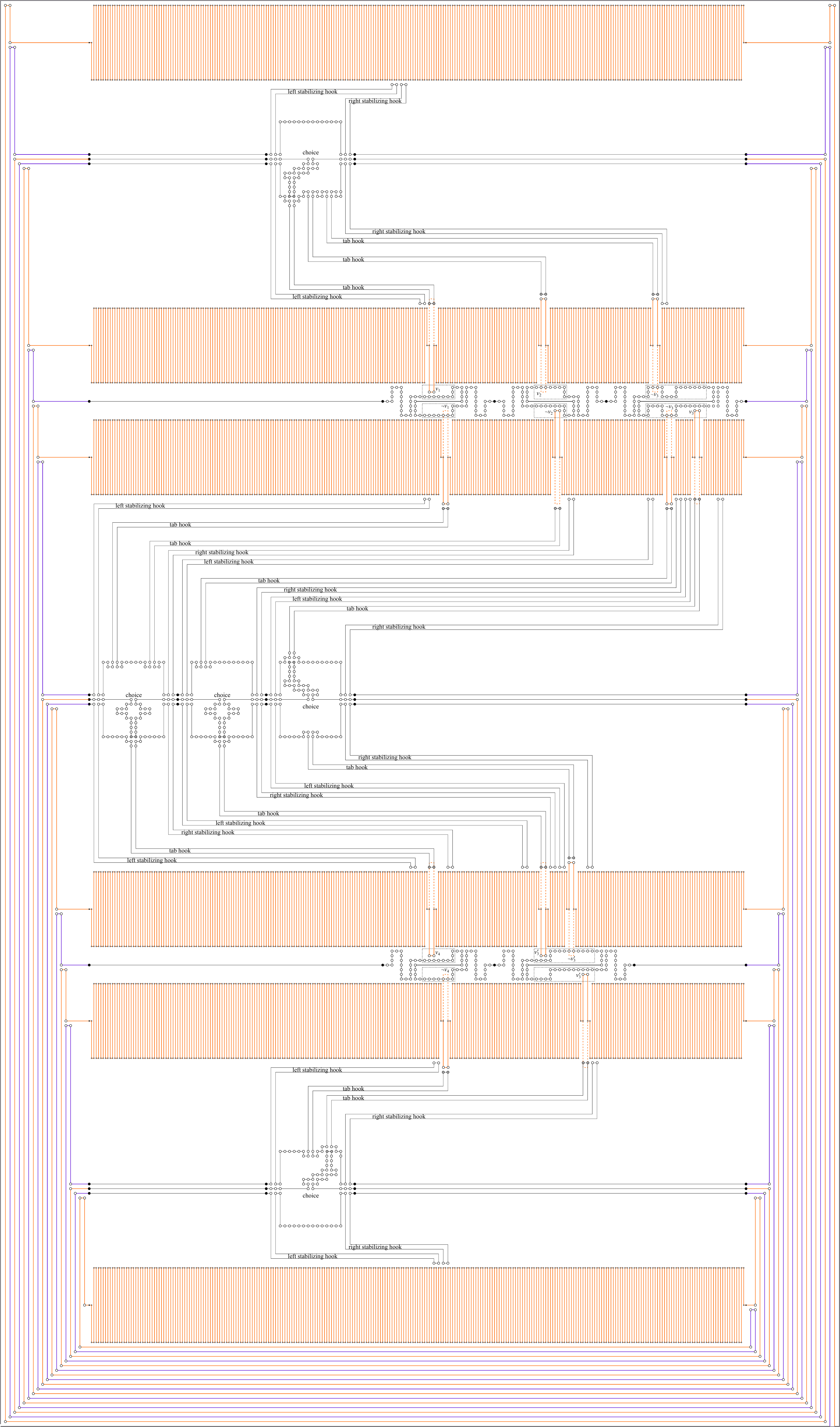}
\caption{An example of the reduction from the instance in
  \figurename~\ref{fig:sample-graph},
  and a solution embedding corresponding to
  assignment $v_1=\True$, $v_2=\False$, $v_3=\True$, and $v_4=\True$.
  (Insulation height and hook segments are not drawn to scale.)}
\label{fig:sample}
\end{figure}

\figurename~\ref{fig:overview} gives an overview of our construction.
Compared to \figurename~\ref{fig:linked-3sat}, we use the minimum vertical
space for the lower half of the rows, while we use significant (and varying)
vertical space for the upper half of the rows (where all the clauses and
variables are), to leave room for gadgets.
In addition, we have changed what the different rows are used for.
Rows cycle through sections containing different types of gadgets ---
``insulation'', variables,
more insulation, sheaths for clauses, choices for clauses,
sheaths for clauses --- starting and ending with insulation.
The chain thus visits alternating rows of variables and sheaths
followed by alternating rows of insulation and clause choices,
but it is different from $\kappa$ in a linked 3SAT instance
because it additionally visits rows of insulation and sheaths
in between variables and clauses that are connected together.
Finally, the entire construction is wrapped in a frame gadget
(shaded gray) which we show forces the drawn bounding box for the
entire construction.  We use this frame and the spiraling of the drawn
segments to force the endpoints of each row of gadgets to be as pictured.
Then we argue that the insulation gadgets force the locations, endpoints,
and bounding boxes of individual variable, sheath, and clause gadgets,
which allows us to argue their correctness.

\begin{figure}[t]\centering
  \includegraphics[scale=0.666]{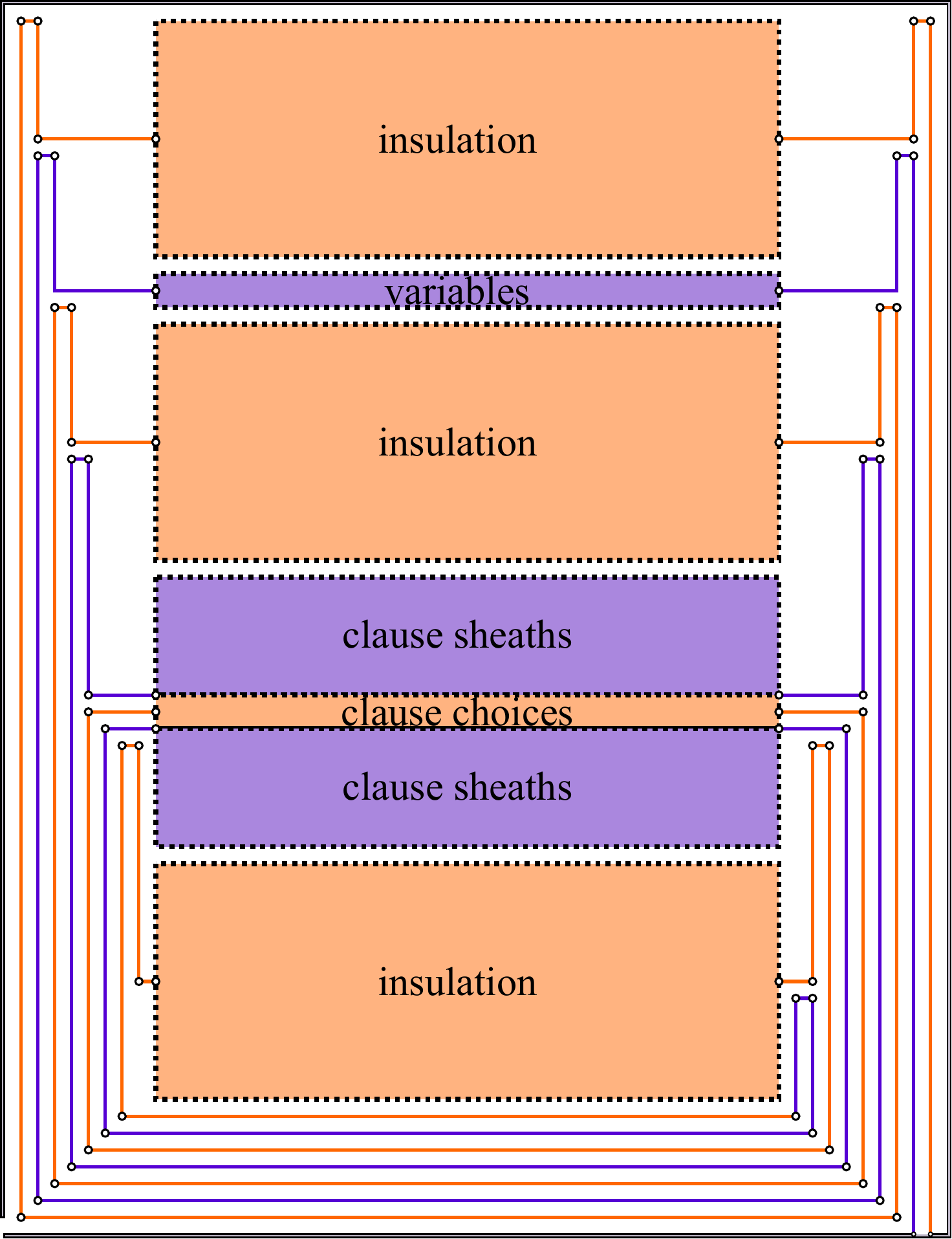}
  \caption{Construction overview, consisting of the outer frame gadget
    (shaded gray, on a grid $5 \times$ smaller than the rest),
    spiral gadget (colored segments), and
    rows of gadgets (dotted boxes)
    cycling through insulation gadgets, variable gadgets, sheath gadgets,
    choice gadgets, sheath gadgets, \dots, and ending with insulation gadgets.
    Insulation rows are extremely thick ($\Theta(n^2)$);
    sheath rows are medium thickness ($\Theta(n)$);
    variable rows are thin ($\Theta(1)$); and
    choice rows are extremely thin ($1$).
    (Straight vertices are not drawn to simplify the figure.)}
  \label{fig:overview}
\end{figure}

\subsection{Frame Gadget}

We start at the top level with the \defn{frame gadget},
shown in \figurename~\ref{fig:frame-loop}, which surrounds all other gadgets.
In fact, the frame gadget works with a $5 \times$ scaled version of the
rest of the construction, so that we can use parity arguments modulo~$5$.

\begin{figure}[th]\centering
  \includegraphics[scale=0.6]{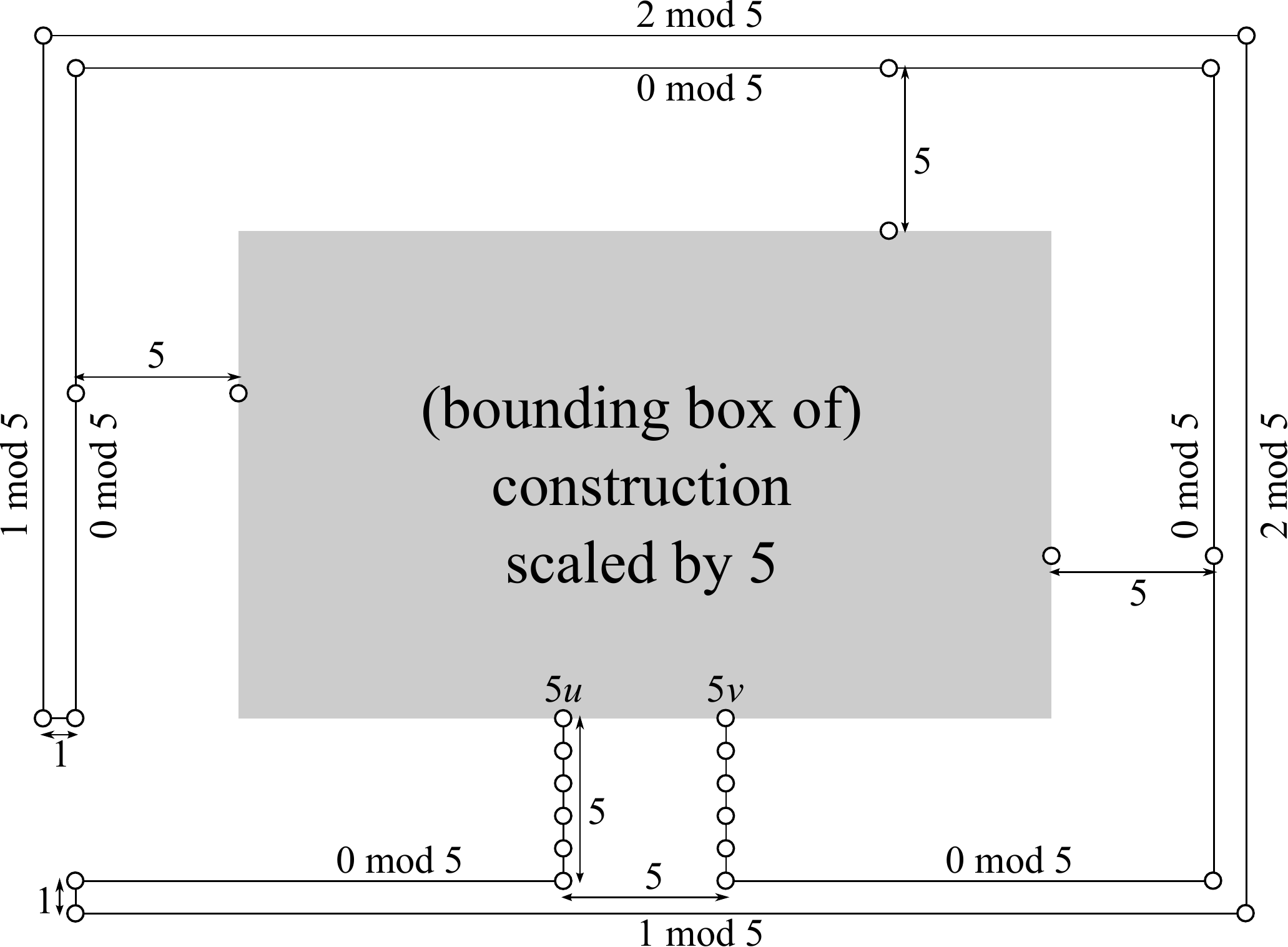}
  \caption{Frame gadget for closed chains,
    consisting of a frame construction~$F$ (all pictured edges)
    attached to a $5 \times$ scaled version of a given construction.
    (Some straight vertices are not drawn to simplify the figure.)}
  \label{fig:frame-loop}
\end{figure}

Formally, suppose we are given a closed chain~$C$,
a distinguished edge $\{u,v\}$,
an intended bounding box $B$ that the chain $C$ should fold into,
and two adjacent points $p,q$ on the bounding box that $u,v$
should fold to respectively.
We scale all inputs uniformly by a factor of~$5$
to produce a new chain $5 C$ with a distinguish edge $\{5 u, 5 v\}$
and a new bounding box $5 B$ with distinguished points $5 p$ and $5 q$
at distance $5$ from each other.
In particular, $5 C$ is the result of replacing each edge of the
original chain $C$ with a straight chain of five edges.
Thus, in $5 C$, all straight chains between $90^\circ$ vertices,
which we call \defn{segments}, have lengths equal to $0$ modulo~$5$.
Equivalently, the vertices of $5 C$ stay on a scaled square grid
with $5 \times 5$ cells.
We then modify $5 C$ by replacing the five-edge chain between $5 u$ and $5 v$
with the frame construction $F$ in \figurename~\ref{fig:frame-loop}
(i.e., all pictured edges),
with the intended folding building most of two rectangles $5$ and $6$
units away from the bounding box $5 B$.
Let $C'$ be the resulting chain (the combination of $5 C$ and $F$).

\begin{claim} \label{claim:frame-loop}
  The frame construction $F$ has a unique folding up to isometries,
  namely the one shown in \figurename~\ref{fig:frame-loop}.
\end{claim}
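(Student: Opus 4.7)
The plan is as follows. Since the frame $F$ is a closed orthogonal fixed-angle sub-chain with a prescribed segment-length sequence, any valid folding of $F$ is determined by a single binary choice (left or right) at each $90^\circ$ vertex. I would first eliminate the isometry freedom by fixing the distinguished edge $\{5u,5v\}$ at $\{5p,5q\}$ on the bounding box, and by fixing the turn direction at one adjacent vertex (this is the only remaining reflective degree of freedom after placing the edge). Then I would trace $F$ from this reference edge and argue that each subsequent turn is forced.

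Two structural properties of $F$ drive the argument. First, for any closed orthogonal chain the signed turns must sum to $\pm 360^\circ$, so the total number of left turns and right turns is fixed exactly; this rules out any folding that deviates from the intended orientation at ``too many'' vertices. Second, $F$ is designed as two long, nearly rectangular nested contours (at distances $5$ and $6$ from $5B$) joined by a short hook, and its long outer segments are much longer than its short perpendicular connectors. Consequently, flipping the turn at any corner of the outer contour either (i)~pushes the next long segment across an already placed long segment, producing a crossing, or (ii)~commits the remaining chain to a half-plane from which the prescribed sequence of segment lengths cannot make it back to close.

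Concretely, I would walk around $F$ in order from the reference edge, and at each $90^\circ$ vertex run a short local case analysis: if the turn is the wrong way, exhibit either an explicit edge crossing with a previously placed portion of the chain, or a ``trapped'' region whose exit would require a segment longer or shorter than any remaining in the prescribed sequence. Because $F$ is essentially two nested rectilinear rectangles joined by a short hook, these cases organize themselves naturally into: outer perimeter first (its four long sides and four corners), then inner perimeter, then the hook. Uniqueness of each segment's placement follows inductively from the placement of its predecessor.

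The hard part is making the ``crossing or obstruction'' step at each corner genuinely rigorous rather than merely pictorial. I expect the cleanest way is to exploit the scale factor of~$5$ built into $C'$: all segments of $5C$ have length divisible by $5$ and their endpoints lie on a $5\times 5$ grid, while the segments of $F$ are chosen at offsets $5$ and $6$ so that $F$ cannot be translated or reflected into the region of $5C$ without colliding with that grid structure. Once this alignment lemma is stated precisely, the corner-by-corner tracing argument above yields the uniqueness of the folding shown in \figurename~\ref{fig:frame-loop}.
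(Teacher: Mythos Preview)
Your plan has a real gap at the outer frame. The device that actually forces the outer wrapping in the paper is not a corner-by-corner collision argument but a \emph{closure argument modulo~$5$}: since $C'$ is a cycle, the signed horizontal (and vertical) displacements of all segments must sum to zero, hence to $0$ modulo~$5$. Every segment of $5C$ has length $\equiv 0 \pmod 5$, so the handful of outer-frame segments whose lengths are $\equiv 1$ or $\equiv 2 \pmod 5$ must themselves cancel modulo~$5$. In each direction there are two segments of length $\equiv 1$ and one of length $\equiv 2$, and the only solutions to $\pm 2 \pm 1 \pm 1 \equiv 0 \pmod 5$ are $2-1-1$ and $-2+1+1$. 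This is what pins the orientation of the three long outer segments in each direction. Your turning-number observation (sum of signed turns $=\pm 360^\circ$) applies to all of $C'$, not to $F$ alone, and since the turns inside $5C$ are unconstrained, it gives no information about~$F$; similarly, your local crossing/trapping analysis cannot distinguish among the outer-frame segments, because they differ in length by only $1$ or $2$ and the slack could in principle be absorbed by $5C$ --- except that $5C$ cannot absorb a nonzero residue modulo~$5$.

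After the outer frame is fixed, your proposed walk-through \emph{is} essentially what the paper does: the remaining segments all have length $\equiv 0\pmod 5$, so they cannot slip through the gap in the outer frame (which lies off the $5\times 5$ grid), and each subsequent turn is forced locally because the border has thickness~$1$ while every remaining segment has length at least~$5$. So your inductive tracing idea is right for the inner part; what is missing is the modulo-$5$ displacement argument that nails down the outer part first. Also note that $F$ itself is an open path from $5u$ to $5v$, not a closed sub-chain; the closure constraint you need is on $C'$, and the role of $5C$ in that constraint is precisely that it contributes $0$ modulo~$5$.
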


\begin{proof}
First observe that, once we embed one edge of the chain,
we know which segments are horizontal and vertical.
Thus we can assume by isometry that the horizontal/vertical assignment
is as in \figurename~\ref{fig:frame-loop}.

Because the chain forms a cycle, the signed vertical and horizontal distances
must sum to zero, and so in particular must sum to $0$ modulo~$5$.
The outer wrapping of the frame construction consists of three vertical
straight chains and three horizontal segments.
In each group of three, there are two segments of length $1$ modulo $5$
(one segment in fact has length exactly $1$),
and one segment of length $2$ modulo~$5$.
All other segments (including those in $5 C$) have length $0$ modulo~$5$.
The only solutions to $\pm 2 \pm 1 \pm 1 \equiv 0$ modulo $5$
are $2 - 1 - 1 \equiv 0$ and $-2 + 1 + 1 \equiv 0$.
Thus the $2$-modulo-$5$ segment in the group must have the opposite orientation
from the two $1$-modulo-$5$ segments in the group.
This forces the folding of the outer frame.

The rest of the frame construction has segment lengths that are $0$ modulo~$5$,
so it is impossible for it to transition between inside and outside of the
outer frame without collisions.
(The gap in the lower-left corner does not intersect any points on the
scaled square grid.)
Thus the rest of the chain must be inside:
if the next two segments both went outside, then they would immediately
collide;
and if one went outside and one went inside, then they could never meet.
Now the rest of the inner frame folding is forced because every segment length
is at least $5$, while the frame border has thickness $1$, so each consecutive
edge is forced to fold the only direction that remains inside the outer frame.
\end{proof}

\begin{claim} \label{claim:frame-contain}
  The framed construction $C'$ has a planar noncrossing embedding
  if and only if $C$ has a planar noncrossing embedding within~$B$.
\end{claim}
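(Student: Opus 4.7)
The plan is to use Claim~\ref{claim:frame-loop} as the main lever: since the frame construction $F$ has a unique folding up to isometries, every planar noncrossing embedding of $C'$ decomposes canonically into the pictured inner frame plus the remainder of $5C$ (namely, $5C$ with the distinguished five-edge segment between $5u$ and $5v$ removed). The two endpoints of that remainder are forced to coincide with the endpoints of the inner frame, which in the canonical picture sit at $5p$ and $5q$ on the boundary of $5B$.

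For the forward direction, I would start with a planar noncrossing embedding of $C$ in $B$ having $u$ at $p$ and $v$ at $q$, and uniformly scale by $5$ to obtain a planar noncrossing embedding of $5C$ in $5B$ with $5u$ at $5p$ and $5v$ at $5q$. Deleting the straight chain of five edges between $5u$ and $5v$ and splicing in the canonical folding of $F$ from \figurename~\ref{fig:frame-loop} yields an embedding of $C'$. Because the canonical folding of $F$ lies entirely in the ring-shaped region between $5B$ and the outer rectangle $5$--$6$ units outside $5B$, while $5C \setminus \{5u,5v\}$ lies inside $5B$, no new collisions arise; the endpoints of $F$ attached at $5p$ and $5q$ match those of the rest of $5C$, so the overall closed curve is consistent.

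For the reverse direction, suppose $C'$ has a planar noncrossing embedding. Fixing one edge to eliminate the ambient isometry, Claim~\ref{claim:frame-loop} forces $F$ to appear exactly as drawn, with its two endpoints at $5p$ and $5q$. The remainder of $C'$ is then a copy of $5C$ minus its five-edge straight segment, and its endpoints are pinned to $5p$ and $5q$. The same inside/outside dichotomy used in the second half of Claim~\ref{claim:frame-loop}'s proof applies: the inner frame is a rectangle offset one unit outside of $5B$, every remaining segment of $5C$ has length a multiple of $5$, so the remainder cannot cross the inner frame boundary and must lie entirely inside it, hence inside $5B$. Reinserting the straight five-edge segment along the portion of $\partial(5B)$ from $5p$ to $5q$ (which is free because $F$ vacates that side of $5B$) and rescaling by $1/5$ produces the desired embedding of $C$ in $B$ with $u$ at $p$ and $v$ at $q$.

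The main obstacle is the containment step in the reverse direction: verifying that the one-unit gap between $5B$ and the inner frame truly cannot be occupied by any segment of $5C$, and that the removed five-edge segment can be validly replaced along $\partial(5B)$ without introducing a crossing. Both follow from the modulo-$5$ invariant on segment lengths together with the fact that $5p$ and $5q$ are adjacent grid points on $\partial(5B)$, but this is the place in the argument where the $5\times$ scaling is genuinely used, and it deserves to be spelled out carefully.
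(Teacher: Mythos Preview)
Your proposal is correct and follows essentially the same approach as the paper: invoke Claim~\ref{claim:frame-loop} to pin down the folding of $F$, then use the modulo-$5$ grid invariant on the segments of $5C$ to argue that the remainder of the chain cannot escape the frame (the paper compresses this into two sentences, whereas you spell out both directions and the reinsertion of the five-edge segment). One small slip: the inner rectangle of $F$ sits $5$ units outside $5B$ (with the outer rectangle at $6$ units), not $1$ unit, though this does not affect your argument since the relevant fact is still that no point of the scaled $5\times5$ grid lies in the escape gap.
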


\begin{proof}
  This follows from Claim~\ref{claim:frame-loop} and that the scaled
  input chain $5 C$ cannot escape the frame (in the lower left) because
  it remains in a scaled square grid of coordinates $0$ modulo~$5$.
\end{proof}

Technically, the frame gadget comes last: the chain $C$ is the rest of the
construction (to be specified), and then our overall reduction is the
framed chain~$C'$.  By Claim~\ref{claim:frame-contain}, we can assume that
the rest of the construction $C$ is forced to remain within a desired
bounding box~$B$.

\subsection{Spiral Gadget}

The \defn{spiral gadget} consists of the colored segments in
Figure~\ref{fig:overview}, i.e., all drawn segments other than the frame gadget.
Other than the factor-$5$ scaling from the frame gadget,
the segments of the spiral gadget are drawn as tightly as possible,
on adjacent (scaled) grid lines.
For simplicity, we present the spiral gadget as if it is on a $1 \times 1$
grid instead of a $5 \times 5$ grid, and use one ``unit'' to mean
the length~$5$.
(In fact, this scale factor will increase to $10$
in Section~\ref{sec:insulation}.)

We use this tightness to argue that the spiral gadget has a unique folding,
even before specifying the details of the insulation, variable, and clause
gadgets.
The main property we need is that each dotted box in Figure~\ref{fig:overview}
(representing a row of insulation, variable, sheath, or choice gadgets)
is topologically a path,
connecting the two \defn{endpoints} shared with the spiral gadget
via some path of segments that starts and ends horizontal.
In addition, each dotted box has a specified width and height
(though the folded path of the gadget might not stay within that box),
all box widths are the same, and boxes are horizontally aligned.
(Boxes are also tightly packed vertically so that there is only one unit of
space between adjacent boxes, but we will not need this property.)
The heights of the upward vertical segments of the spiral
are then designed to be just below the endpoint (and spiral horizontal segment)
of the previous dotted box.
The spiral then typically U-turns to descend to the desired height
of the endpoint of the current dotted box;
in the special cases of choice rows and lower sheath rows,
the endpoint is at the top of the dotted box, so we omit the U-turn
and just turn immediately.
We require that the bottommost box is at least as high as all other boxes
(which follows from the bottommost box being insulation,
which is the tallest of all row types).

\begin{claim} \label{claim:spiral}
  The spiral gadget has a unique folding,
  pictured in Figure~\ref{fig:overview}.
  In particular, the endpoints where dotted boxes meet the spiral gadget
  have fixed positions relative to the frame.
\end{claim}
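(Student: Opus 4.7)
The plan is to leverage Claim~\ref{claim:frame-contain} to assume the entire spiral is confined to the bounding box $B$, and then show by induction, working from the frame inward, that each spiral segment is forced into its pictured position. Fixing one edge of the chain just outside the frame pins down the orientation (up to the isometry already chosen in Claim~\ref{claim:frame-loop}), so at every subsequent $90^\circ$ vertex the only remaining freedom is whether to turn left or right.

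First I would handle the outermost spiral segment, which connects the frame to the endpoint of the bottommost dotted box. The bottommost box is an insulation row and is the tallest of all rows, so its endpoint height equals the height of the outermost spiral layer; the segment length prescribed by the construction matches exactly the horizontal distance from the frame to that endpoint, which forces it to run straight along the bottom of $B$. Next I would handle the vertical segment attached to that endpoint: its length is specified to place its top just below the endpoint of the dotted box one level up, so it must hug the inside of the frame rather than turning inward (turning inward would either collide with the next box region or leave the segment too short to reach any valid endpoint on an adjacent grid line).

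Proceeding inductively, each newly placed spiral segment enters a narrow corridor of width one unit bordered by (a) the frame or a previously-placed spiral layer on one side and (b) the dotted box row on the other side. Because the dotted boxes are horizontally aligned, have common width, and are tightly packed with only one unit of vertical spacing, there is no room to deviate: horizontal spiral segments must span the full width before a U-turn, and the U-turn itself (two short segments of length one unit followed by a downward descent to the next endpoint) is forced by the one-unit corridor width. For the special choice and lower-sheath rows where the endpoint is at the top of the box, the same corridor argument shows that the U-turn is replaced by a direct turn, since the vertical segment arriving from above terminates exactly at the endpoint height.

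The main obstacle will be ruling out a global rearrangement in which the spiral ``cuts the corner'' by entering the interior region earlier than pictured. This is prevented by combining two facts: the spiral segment lengths are all a multiple of $5$ in our scaled coordinates, so their endpoints lie on the coarse $5 \times 5$ grid imposed by the frame scaling, and the distance from the frame to each dotted-box endpoint along a valid path equals the prescribed segment length only for the pictured folding. Any alternative folding either produces a length mismatch modulo $5$, overshoots the box endpoint, or collides with a previously-placed segment in the one-unit corridor. Once every spiral segment is forced, the shared endpoints between the spiral and the dotted boxes inherit fixed positions relative to the frame, completing the claim.
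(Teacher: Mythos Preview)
Your proposal has a genuine gap: you treat the dotted boxes as if they are already placed and can serve as one side of a ``narrow corridor of width one unit.'' But at this point in the argument the contents of those boxes are completely undetermined---each dotted box is just some path between two endpoints, and the paper explicitly notes that ``the folded path of the gadget might not stay within that box.'' So you cannot use the box as a barrier. The whole purpose of this claim is to pin down the spiral \emph{before} anything is known about the insulation, variable, sheath, or choice rows; your corridor argument assumes the conclusion.

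The paper's actual mechanism is different and does not rely on the boxes at all. It exploits the fact that certain spiral segments have length equal to the full height of the bounding box $B$ (the segment called $s_8$), which forces their vertical extent regardless of what the boxes do; it then argues about the short segments adjacent to these full-height segments using only collisions with the frame and with previously-placed spiral segments. Crucially, the paper alternates between the two ends of the chain (segments $s_1,\dots,s_9$ from the front, then $s_k,\dots,s_{k-8}$ from the back), shrinking the effective bounding box by one unit on three sides each round, and it restricts attention to columns that do not overlap any variable/clause row precisely because the box contents are unknown. Your modulo-$5$ argument is also inapplicable here: every segment of the spiral already has length $0$ modulo $5$, so parity modulo $5$ gives no information within the spiral---it is used only for the frame gadget.
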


\begin{proof}
  Consider the segments $s_1, s_2, \dots, s_k$ of the spiral gadget,
  starting with the two extreme segments $s_1, s_k$ that attach to the frame;
  refer to Figure~\ref{fig:spiral-analysis}.
  These segments must both go upward, to avoid intersecting the frame.
  The outer (right) segment $s_1$ is maximally right in the bounding box $B$
  and goes all the way to the top,
  while the inner (left) segment $s_k$ is one grid line to the left of $s_1$
  and is a little shorter.
  For consistent labeling of segments, we assume that $s_1$ [and $s_k$]
  are immediately followed preceded] by a U-turn, but we also describe what
  modifications are necessary to handle the non-U-turn case.

  \begin{figure}[ht]
    \centering
    \includegraphics[scale=0.666]{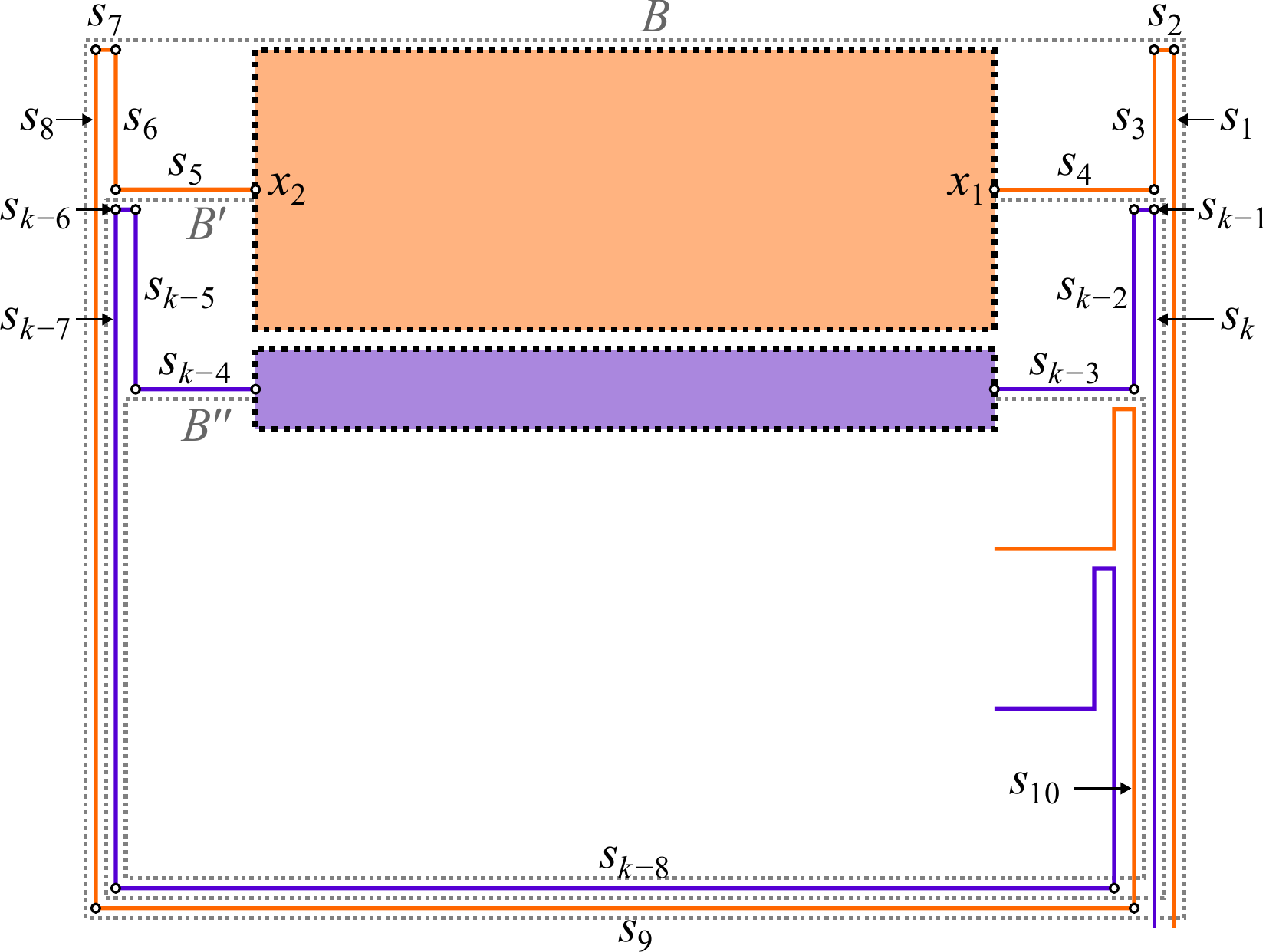}
    \caption{Forced folding of spiral gadget from Figure~\ref{fig:overview}.}
    \label{fig:spiral-analysis}
  \end{figure}

  We argue that segments $s_1, s_2, s_3, s_4$ have forced foldings.
  Because $s_1$ is maximally right in~$B$,
  segment $s_2$ of length $1$ must go left
  to avoid intersecting the right side of the frame.
  Because $s_1$ reaches the very top of~$B$,
  segment $s_3$ must then go down
  to avoid intersecting the top side of the frame.
  Because $s_2$ has length~$1$, so $s_3$ is just one grid line left of~$s_1$,
  segment $s_4$ must go left to avoid intersecting~$s_1$.
  If $s_1$ is in fact not followed by a U-turn, then we view segments $s_2$
  and $s_3$ as having length~$0$,
  and instead $s_4$ directly turns;
  in this case, $s_4$ must turn left to avoid intersecting the right side of
  the frame.

  Segment $s_4$ reaches one endpoint $x_1$ of the topmost clause row.
  We determine the location of the other endpoint~$x_2$,
  and the locations of the following segments $s_5, s_6, s_7, s_8, s_9$,
  as follows.
  Let $s_5$ denote the next segment of the spiral gadget,
  which is attached to~$x_2$.
  Segment $s_8$ is the entire height of the bounding box~$B$,
  which enforces the set of $y$ coordinates of its ends,
  but not whether it points up or down.
  In fact, $s_8$ must point down: otherwise, segment $s_9$ would be at the
  top and go rightward to be adjacent to (but not quite intersecting)~$s_2$,
  and then segment $s_{10}$ would be forced to go down
  (to avoid leaving $B$) and intersect~$s_4$:
  $s_{10}$ is at least as long as the height of the bottommost dotted box,
  which we assumed is the tallest dotted box, so is at least as long as~$s_3$,
  which is half the height of the current top dotted box.
  Thus $s_9$ must be at the very bottom of the bounding box~$B$.
  Segments $s_7$, $s_6$, and $s_5$ then have forced orientations
  of left, up, and left (from~$x_2$),
  to avoid local intersection with the left side of the frame,
  the top side of the frame, and $s_8$, respectively.

  At this point, we have shrunk the effective bounding box $B$
  by $1$ unit on the right, left, and bottom sides
  (because of segments $s_1$, $s_8$, and $s_9$)
  and put the top side just below segments $s_4$ and~$s_5$.
  The new bounding shape $B'$ is not exactly a box,
  because the topmost clause row has unknown shape,
  but we can treat it as a box because we will only argue about
  columns that do not overlap any clause/variable row.

  Next we look at the other end of the chain.
  We apply the $s_1, s_2, s_3, s_4$ argument to determine the foldings of
  $s_k, s_{k-1}, s_{k-2}, s_{k-3}$.
  Then we apply the $s_5, s_6, s_7, s_8, s_9$ argument to determine the
  foldings of $s_{k-4}, s_{k-5}, s_{k-6}, s_{k-7}, s_{k-8}$.
  Now we have shrunk the effective bounding box $B'$
  by $1$ unit on the right, left, and bottom sides,
  and put the top side just below segments $s_{k-3}$ and~$s_{k-4}$.

  With this smaller bounding box $B''$, we can repeat the above arguments,
  alternating between analyzing the next nine segments from the front
  and the previous nine segments from the back.
  Here we exploit the symmetry of the spiral gadget construction:
  each round with a smaller box looks just like the previous round.
  The only special segment is $s_{(k+1)/2}$ which connects the front and
  back of the chain directly instead of via a variable/clause row,
  but this only simplifies the argument.
  In the end, we determine all edges of the spiral gadget.
\end{proof}

\subsection{Insulation Gadget}
\label{sec:insulation}

\figurename~\ref{fig:insulation} shows an \defn{insulation gadget}
which occupies an entire row in \figurename~\ref{fig:overview}.
This gadget works on a half-grid relative to all other gadgets
except the frame; in other words, the spiral, variable, sheath, and clause
gadgets are all scaled $2 \times$ relative to the insulation gadget,
which is scaled $5 \times$ relative to the frame gadget,
for a total scale of $10 \times$
for the spiral, variable, sheath, and clause gadgets.
For simplicity, we allow segments of length $\frac 1 2$
so that the spiral remains on a $1 \times 1$ grid;
but in the end everything will be scaled by $10 \times$ to ensure
all lengths are integers divisible by~$5$.

\begin{figure}[ht]
  \centering
  \includegraphics[scale=0.666]{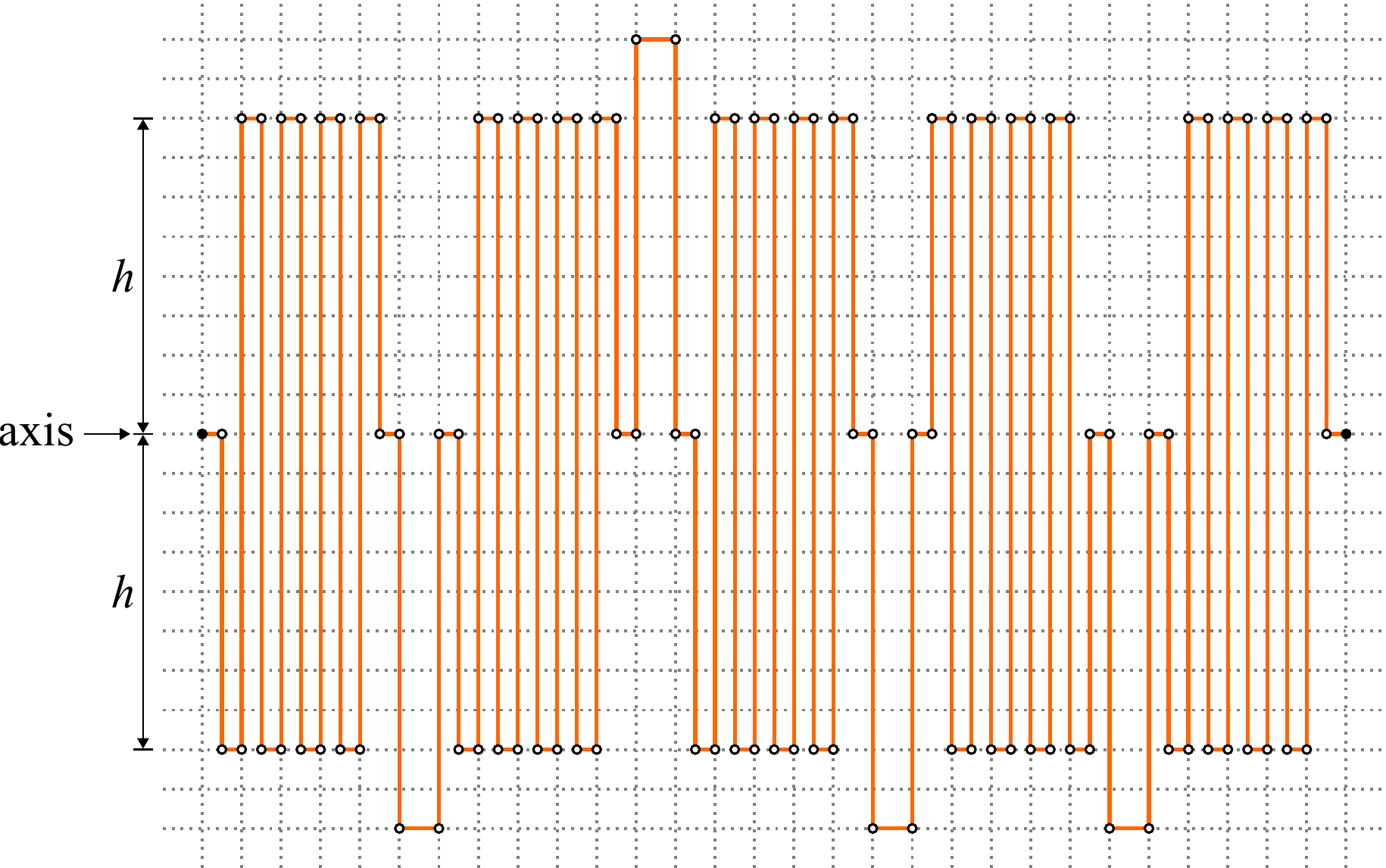}
  \caption{An example insulation gadget of height~$h$.
    (Straight vertices are not drawn to simplify the figure.)}
  \label{fig:insulation}
\end{figure}

The insulation gadget starts and ends with horizontal segments of
length $\frac 1 2$ that lie on a common horizontal \defn{axis}.
In between, the gadget consists of an alternation between
``grilles'' and ``tabs'', starting and ending with a grille,
separated by horizontal segments of length $\frac 1 2$
that also lie on the axis.
A \defn{grille} consists of a sequence of segments with lengths in the pattern
$h, \frac 1 2, 2 h, \big(\frac 1 2, 2 h\big)^{2 k}, \frac 1 2, h$,
where ${\cdots}^{2 k}$ denotes an even number of repetitions,
resulting in an integer width $k+1$ (for any desired integer $k \geq 0$).
A \defn{tab} consists of three segments with lengths $h+2, 1, h+2$,
which has width~$1$.
Notably, tabs lie on the integral grid used by all other gadgets
other than the frame, and extend $2$ units farther than grilles.
All grilles and tabs use the same integer parameter~$h$,
the \defn{half-height} of the insulation gadget.

To force the folding of an insulation gadget, we need the property that $h$
is larger than the height of any possible folding of the adjacent rows of
gadgets above or below the insulation gadget.
This property is easy to achieve by setting $h$ to be larger than
the sum of lengths of all vertical segments in those rows.

\begin{claim} \label{claim:insulation}
  In any folding of the entire construction,
  an insulation gadget must be folded as in \figurename~\ref{fig:insulation},
  with each grille and tab optionally reflected through the axis.
\end{claim}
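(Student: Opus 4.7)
The plan is to walk along the chain of the insulation gadget from one endpoint to the other and show by induction that each segment's orientation (left/right for horizontals, up/down for verticals) is forced, except for one independent bit of freedom per grille and per tab corresponding to the reflection through the axis asserted in the claim.

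By Claim \ref{claim:spiral}, the two endpoints of the insulation gadget are pinned at fixed points on a common horizontal line, which becomes the axis; by Claim \ref{claim:frame-contain}, the whole construction is trapped in a bounding box $B$. So the initial horizontal $\frac12$ segment at the left endpoint cannot point left without leaving $B$, forcing it to go right. Entering the first grille, the initial vertical of length $h$ may go up or down: fixing this choice as ``up'' records the grille's reflection. I would then argue inductively that every subsequent segment in the grille is forced. Each internal horizontal $\frac12$ cannot reverse direction, since the next vertical of length $2h$ would either retrace the just-placed vertical (a self-crossing) or pass through the start vertex on the axis (also a crossing). Each vertical of length $2h$ must alternate in sign, since repeating the previous sign would put the chain at signed distance $3h$ from the axis, exceeding the insulation row's half-height $h+O(1)$ and forcing a crash into the neighboring row or the spiral segments bounding it.

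The length-$\frac12$ horizontals on the axis between grilles and tabs must each go right, by the same in-$B$ and no-crossing argument. For a tab (vertical $h+2$, horizontal $1$, vertical $h+2$), the first vertical independently flips up or down (the tab's reflection), the length-$1$ horizontal must then go forward, since reversing would place the returning vertical atop the preceding segment, and the final vertical must return to the axis because continuing in the same direction would reach distance $2(h+2)$, outside the row. Iterating this forced grille--tab alternation from the left endpoint, and symmetrically applying the mirror argument from the right endpoint, pins every edge of the insulation gadget up to the promised per-grille and per-tab reflections.

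The crucial quantitative input throughout is that $h$ exceeds the total vertical extent of any adjacent row, so no vertical segment of length $\geq h$ can fit inside an adjacent row; combined with the spiral's horizontal runs placed by Claim \ref{claim:spiral} that bound the insulation row above and below, this means any wrongly-oriented vertical of length $h$, $2h$, or $h+2$ either self-intersects the already-placed portion of the chain or crashes into a neighboring gadget. The main obstacle I foresee is the case analysis at the grille--tab interfaces and in ruling out subtle ``escape'' attempts where a misfolded segment briefly enters an adjacent row and returns: there I would argue that any such escape forces a vertical of length $\geq h$ to start strictly inside an adjacent row, contradicting the choice of $h$.
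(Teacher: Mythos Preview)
Your approach is essentially the paper's: walk the chain, use the free up/down choice on the first vertical of each grille/tab, then force every subsequent vertical toward the axis via the size of~$h$, and force every short horizontal rightward via local self-intersection. The overall structure is fine.

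There is one genuine soft spot. You repeatedly invoke ``the spiral's horizontal runs \dots\ that bound the insulation row above and below'' as the barrier a mis-oriented vertical would crash into. No such barriers exist: the spiral's horizontal segments sit at the \emph{endpoint} height of each row (for insulation, the axis itself), not at the top and bottom of the row. What actually sits above and below an insulation row are the adjacent variable/sheath/choice rows, whose foldings are not yet determined at this point in the argument, so you cannot simply say the insulation ``crashes into'' them. The paper's fix is the one you almost state in your last paragraph but do not quite use: if a vertical segment of the insulation penetrates an adjacent row by at least~$h$, then because $h$ exceeds the total vertical travel available to that row's chain, that row's chain cannot get around the penetrating segment to connect its two spiral-pinned endpoints. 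That disconnection argument, not a collision with a fixed barrier, is what rules out the wrongly-oriented $h$, $2h$, and $h+2$ verticals. Similarly, your ``cannot point left without leaving~$B$'' for the very first length-$\tfrac12$ segment is not literally true (the endpoint is well inside~$B$); the actual obstruction is the incident vertical spiral segment, and for the internal separating $\tfrac12$-segments it is collision with the previous grille. With those two repairs, your argument matches the paper's.
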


\begin{proof}
  Because the endpoints of the insulation gadget (filled in black)
  are attached to the spiral gadget,
  their positions are fixed and the incident length-$\frac 1 2$
  segments must be horizontal by Claim~\ref{claim:spiral}.
  We refer to the extension of these two extreme segments as the \defn{axis}.

  Each grille and tab has an initial choice for its first segment
  (which has length~$h$) to fold up or down,
  corresponding to choosing one folding or its reflection through the axis.
  After this first segment,
  the chain is at $\pm h$ vertical distance from the axis.
  The third segment, which has length $h$ or $2 h$, must go toward the
  axis (i.e., in the opposite direction as the first segment);
  otherwise, the insulation gadget penetrates the adjacent row of gadgets
  by a vertical distance of at least~$h$, which prevents the chain in that row
  from connecting its two endpoints without collisions, by our assumption that
  $h$ is larger than the height of any possible folding of the row.
  Thus the second segment (which has length~$\frac 1 2$) must go rightward;
  otherwise, the third segment would intersect the previous horizontal segment.

  For the grille, any remaining segments of length $2 h$ must alternate in
  direction, always going toward the axis, by a symmetric argument,
  as we remain at $\pm h$ vertical distance from the axis;
  and all intervening segments of length $\frac 1 2$ must go rightward
  or they would cause collision between the previous and next vertical segments.

  For both the grille and tab, the last segment (which has length~$h$)
  must again go toward the axis, and thereby return to the axis.
  The next-to-last segment (of length $\frac 1 2$ for a grille and
  $1$ for a tab) must again go rightward: for a grille, going left would
  cause collision between the previous and last vertical segment,
  and for a tab, going left would cause collision between the previous grille
  (which exists because the insulation gadget starts with a grille)
  and the last vertical segment.

  Therefore every grille and tab is forced modulo the initial up/down choice.
\end{proof}

For understanding the impact of insulation on other gadgets,
we can restrict attention to the occupied points of the integral grid in
\figurename~\ref{fig:insulation}.
Grilles act as walls: independent of whether they are flipped through the axis,
they block the same $k+1 \times 2 h + 1$ rectangle of points.
Tabs act as local binary choices (wires):
they either block a top $2 \times h+2$ rectangle of points above the axis
and leave empty the bottom $2 \times h+2$ rectangle of points below the axis,
or vice versa.
In particular, we use the fact that the vertical gap between two consecutive
grilles (on the side that does not have the tab) is only two grid points wide.
Thus, if an integral chain entered and exited such a gap, it must have exactly
two turns separated by a segment of length~$1$; no more turns are possible.

Tabs in the insulation gadget provide communication wires between
gadgets above and below the insulation, while the forced blocking of grilles
give those adjacent gadgets an effective bounding box.
The insulation gadget can support any pattern of (integer-aligned) tabs
provided no two tabs are consecutive and there is no tab at the left or
right extreme: we simply fill in the remaining space with grilles.

\subsection{Choice Gadget}

\begin{figure}[th]\centering
  \begin{subfigure}{0.27\linewidth}\centering
    \includegraphics[scale=0.666]{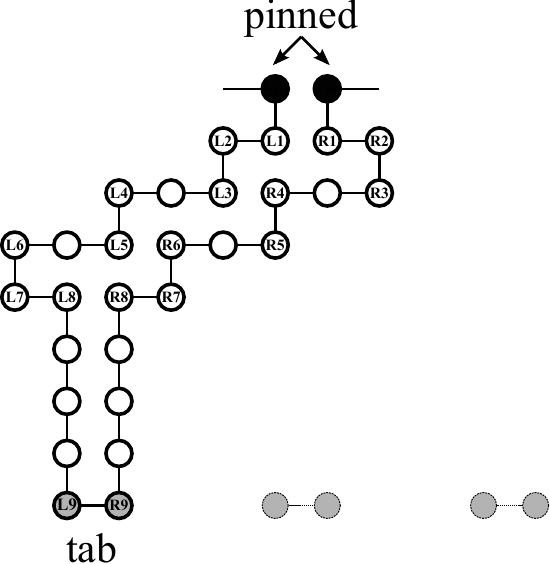}
    \caption{The gray tab can be placed in one of three places
      below the pinned vertices.}
    \label{fig:choice:tabs}
  \end{subfigure}\hfill
  \begin{subfigure}{0.575\linewidth}\centering
    \includegraphics[scale=0.666]{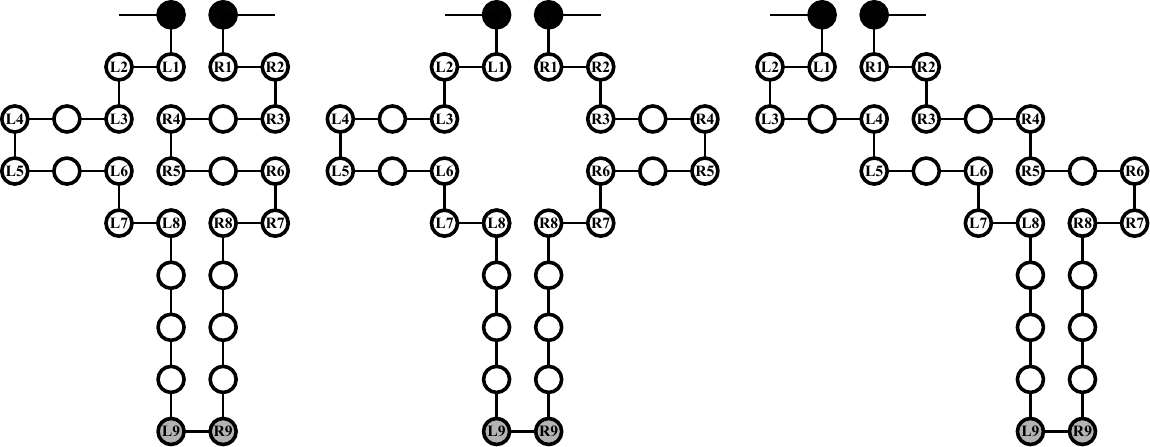}
    \caption{Representative configurations (modulo reflection).\newline\newline}
    \label{fig:choice3}
  \end{subfigure}\hfill
  \begin{subfigure}{0.1\linewidth}\centering
    \includegraphics[scale=0.666]{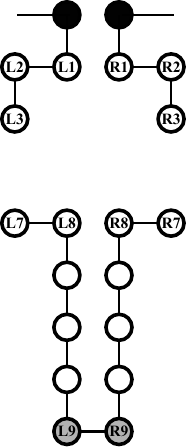}
    \caption{\raggedright Forced choices.\newline}
    \label{fig:choice-force}
  \end{subfigure}\hfill

  \medskip

  \begin{subfigure}{\linewidth}\centering
    \includegraphics[scale=0.666]{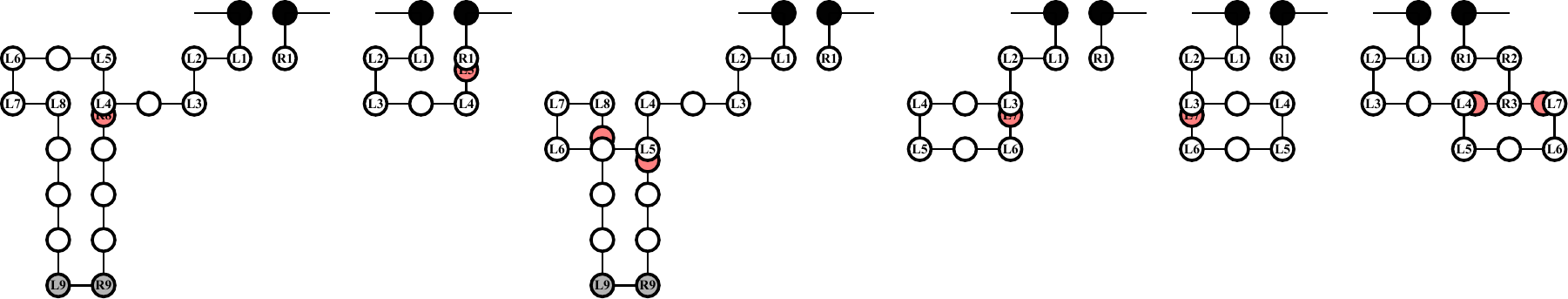}
    \caption{Attempting to turn L4 or L6 upward causes intersection.}
    \label{fig:choice-fail}
  \end{subfigure}
  \caption{Choice gadget, where black vertices are pinned.}
  \label{fig:choice}
\end{figure}

\figurename~\ref{fig:choice} illustrates the \defn{choice gadget},
which is the central part of a clause gadget.
We refer to the two gray vertices as the \defn{tab} of this gadget.
For now, we assume the two black endpoints are horizontally adjacent;
we will effectively pin these endpoints later.
We will also assume that the both ends of the chain turn the same direction ---
either upward or downward --- at these endpoints, which will be forced by
the long chains attached to the endpoints.
Under these assumptions, there are six types of configurations
as characterized by the tab locations:

\begin{claim} \label{claim:choice}
Assuming the endpoints are horizontally adjacent both turn the same
direction (upward or downward),
the tab of a choice gadget can be placed in exactly six locations
(with three different horizontal shifts).
\end{claim}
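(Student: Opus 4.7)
My plan is to do a forward/backward case analysis on the turn directions of the interior vertices of the choice gadget, using non-crossing (and the fact that both pinned endpoints are horizontally adjacent and turn the same way, say upward) to eliminate most possibilities, and then enumerate what remains.

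First, I would label the chain segments of the choice gadget as $L_1, L_2, \dots, L_t$ starting from one pinned endpoint. Since $L_1$ leaves its endpoint upward and is of unit length, its placement is determined; symmetrically for $L_t$ at the other endpoint. I would then walk inward from both ends in parallel, showing one segment at a time that the next turn is forced by local non-crossing with the chain already placed. The key obstructions are exactly those pictured in \figurename~\ref{fig:choice-fail}: segments such as $L_4$ and $L_6$ cannot turn upward, because doing so would push them into $L_1, L_2$ (or their mirror counterparts at the other end). Iterating such arguments collapses the analysis to the forced skeleton drawn in \figurename~\ref{fig:choice-force}.

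Next, I would enumerate the turns that remain free after this forcing, which live in a small cluster of vertices near the tab. These free turns should split into exactly two independent binary/ternary choices: (i) a horizontal choice that shifts the tab among the three positions in \figurename~\ref{fig:choice:tabs}, matching the three representatives in \figurename~\ref{fig:choice3}; and (ii) a local reflection of the tab's own fold that places its two gray vertices in one of two configurations for each horizontal shift. Multiplying gives the claimed $3 \times 2 = 6$ tab locations, and no other combination of turns is consistent with the forced skeleton.

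The hard part will be the bookkeeping for the forcing step: I need to rule out every hybrid case in which some interior vertex turns against the direction asserted in \figurename~\ref{fig:choice-force}, not just the canonical ``all wrong in one place'' cases depicted in \figurename~\ref{fig:choice-fail}. The cleanest route is probably to process the two ends of the chain in lockstep, so that at each inductive step the candidate ``wrong'' turn is adjacent to a segment that was just placed two steps earlier; this keeps the collision witness local and avoids a combinatorial blow-up. Once the forced skeleton is established, the final enumeration of the remaining free turns is short.
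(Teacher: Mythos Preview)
Your high-level approach---force turns from both ends, then enumerate what remains---matches the paper's. But your decomposition of the six tab locations is wrong, and this will make your final enumeration come out incorrectly.

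The factor of two in ``six locations'' is \emph{not} a local reflection of the tab's own fold. It is the global reflection through the horizontal line of the endpoints, i.e., the choice of whether both endpoints turn upward or downward. Once you fix the endpoints to turn one way (as you did, ``say upward''), the tab lands in exactly three locations, all on the same side of the baseline; these three locations arise from five distinct embeddings (the three in \figurename~\ref{fig:choice3} plus two vertical-axis reflections, the middle configuration being symmetric). There is no extra local tab flip: in the paper's labeling, the tab vertices $L9,R9$ are both forced to point the same way, and that way is dictated by the endpoint direction. So your step (ii) does not exist, and you will be short by a factor of two unless you un-fix the endpoint direction and account for both cases.

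A secondary issue: pure end-inward forcing stalls almost immediately. After $L1,L2$ (and $R1,R2$) are forced, the next vertex $L3$ is a genuinely free left/right choice, so you cannot continue inward by local collision alone. The paper instead also forces the three vertices nearest the tab ($L7,L8,L9$ and their $R$ counterparts) using a non-local reachability argument---if $L9$ pointed the wrong way, the remaining chain could not return to the endpoint without crossing---and only after pinning both the endpoint side and the tab side does it dispatch $L4,L6$ (and $R4,R6$) via the collision pictures of \figurename~\ref{fig:choice-fail}. Your plan needs that middle-outward step as well; ``walk inward from both ends'' by itself will not reach the skeleton of \figurename~\ref{fig:choice-force}.
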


\begin{proof}
We claim that there are ten possible embeddings of the choice gadget:
the three in \figurename~\ref{fig:choice3},
their reflections through the vertical line bisecting the endpoints
(which adds two more, as the middle diagram is reflectionally symmetric),
and the reflections of these five embeddings through the horizontal line
connecting the endpoints.
We assume by symmetry that the endpoints both turn downward,
reducing to the first five embeddings.

To enumerate all embeddings,
we consider which way each $90^\circ$ vertex can turn.
As in Figure~\ref{fig:choice},
we label the $90^\circ$ vertices after the left endpoint by
L1, L2, \dots, L9, where L9 is the left tab vertex;
and symmetrically label the $90^\circ$ vertices before the right endpoint by
R1, R2, \dots, R9, where R9 is the right tab vertex.
These $90^\circ$ vertices offer a sequence of choices, alternating between
turning left vs.\ right and turning up vs.\ down.

Some of these choices are immediately forced;
refer to Figure~\ref{fig:choice-force}.
First, L1 and R1 (the two vertices adjacent to the endpoints)
must turn away from each other,
in order to not immediately intersect each other.
Second, L2 and R2 must turn downward,
in order to avoid intersecting the other neighbors of the endpoints.
Third, L9 and R9 (the tab vertices) must both point upward:
if L9 pointed downward, say, then L8 would be so much lower than L9
that the chain could not reach the left endpoint without intersection.
Fourth, L8 and R8 must turn away from each other,
in order to not immediately intersect each other.
Furthermore, L8 must point left and R8 must point right ---
otherwise, they could not reach their respective endpoints
without crossing --- so L7 must turn right and R7 must turn left.

\figurename~\ref{fig:choice3} and its reflections correspond to
always (in particular, L4, R4, L6, R6) choosing to turn down instead of up;
making all possible left vs.\ right choices for L3 and R3; and
making all possible left vs.\ right choices for L5 and R5
that keep the tab vertices in adjacent columns.
The remaining cases to consider are when L4, R4, L6, or R6 turn up.
By symmetry, it suffices to consider the cases when L4 or L6 turn up.
Figure~\ref{fig:choice-fail} shows that these cases intersect,
no matter what choices we made for L3 and L5.
\end{proof}

\subsection{Clause and Sheath Gadgets}

The clause gadget, shown in \figurename s~\ref{fig:hook1} and~\ref{fig:hook2},
consists of three separate chains (with endpoints marked black):
one extending the choice gadget,
and one ``sheath'' above and one below the choice chain.
On the left and right ends of the choice gadget,
we add suitably long horizontal segments.
Above and below the choice gadget, we add a \defn{sheath gadget} forming
a rectangular ``container'' around the choice gadget, and
up to five outward \defn{hooks} attached.
Each hook is a path of three segments (vertical, horizontal, vertical)
doubled to have thickness~$1$.

\begin{figure}\centering
  \includegraphics[scale=0.5]{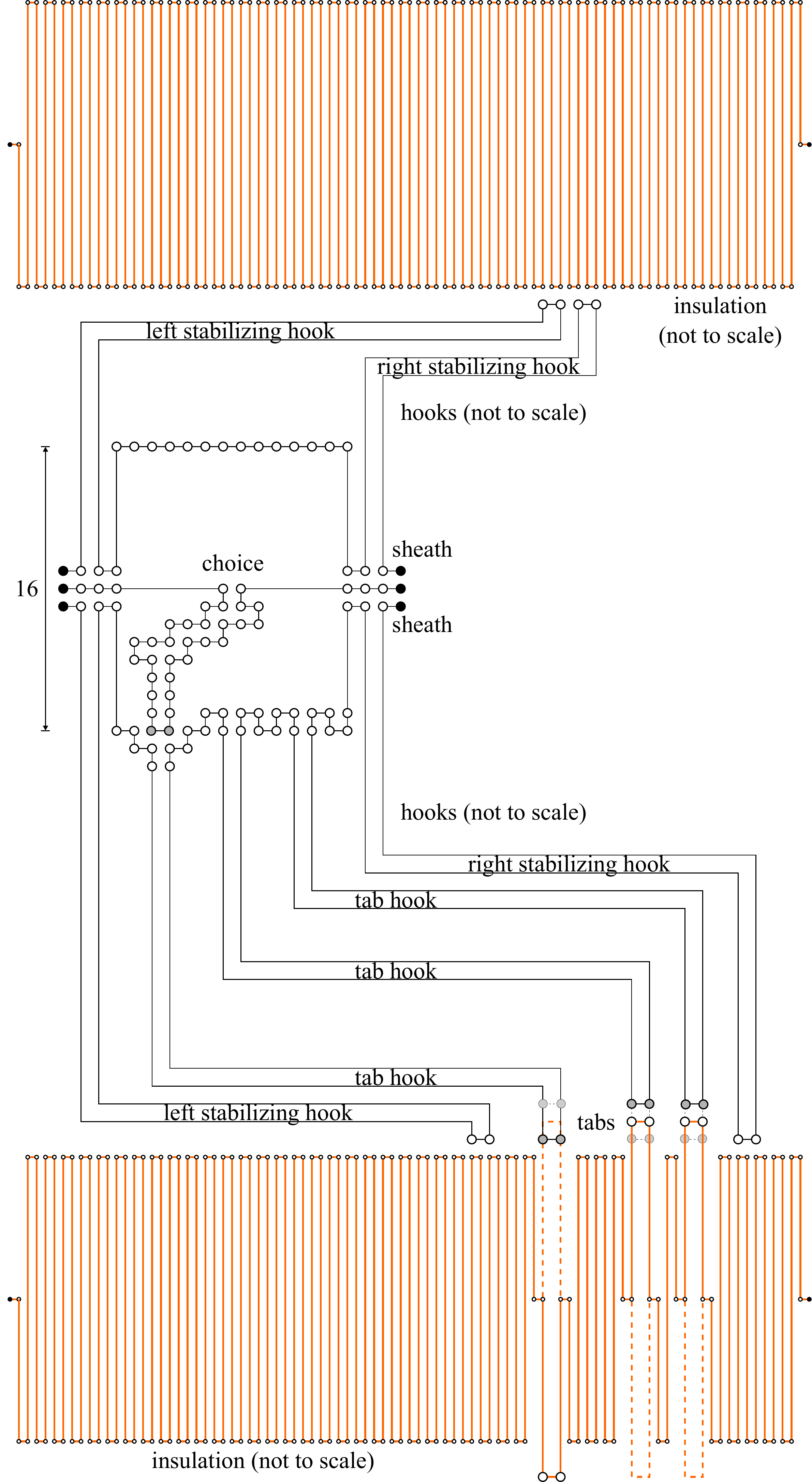}
  \caption{One version of clause gadget
    (three chains of black segments with black endpoints)
    and its interaction with the neighboring insulation gadgets
    (colored segments).
    This version has three tabs on the bottom,
    and the leftmost tab extended.
    Each tab shows the unused alternate state with dashed lines.
    (Some vertices are not drawn to simplify the figure.)}
  \label{fig:hook1}
\end{figure}

\begin{figure}\centering
  \includegraphics[scale=0.5]{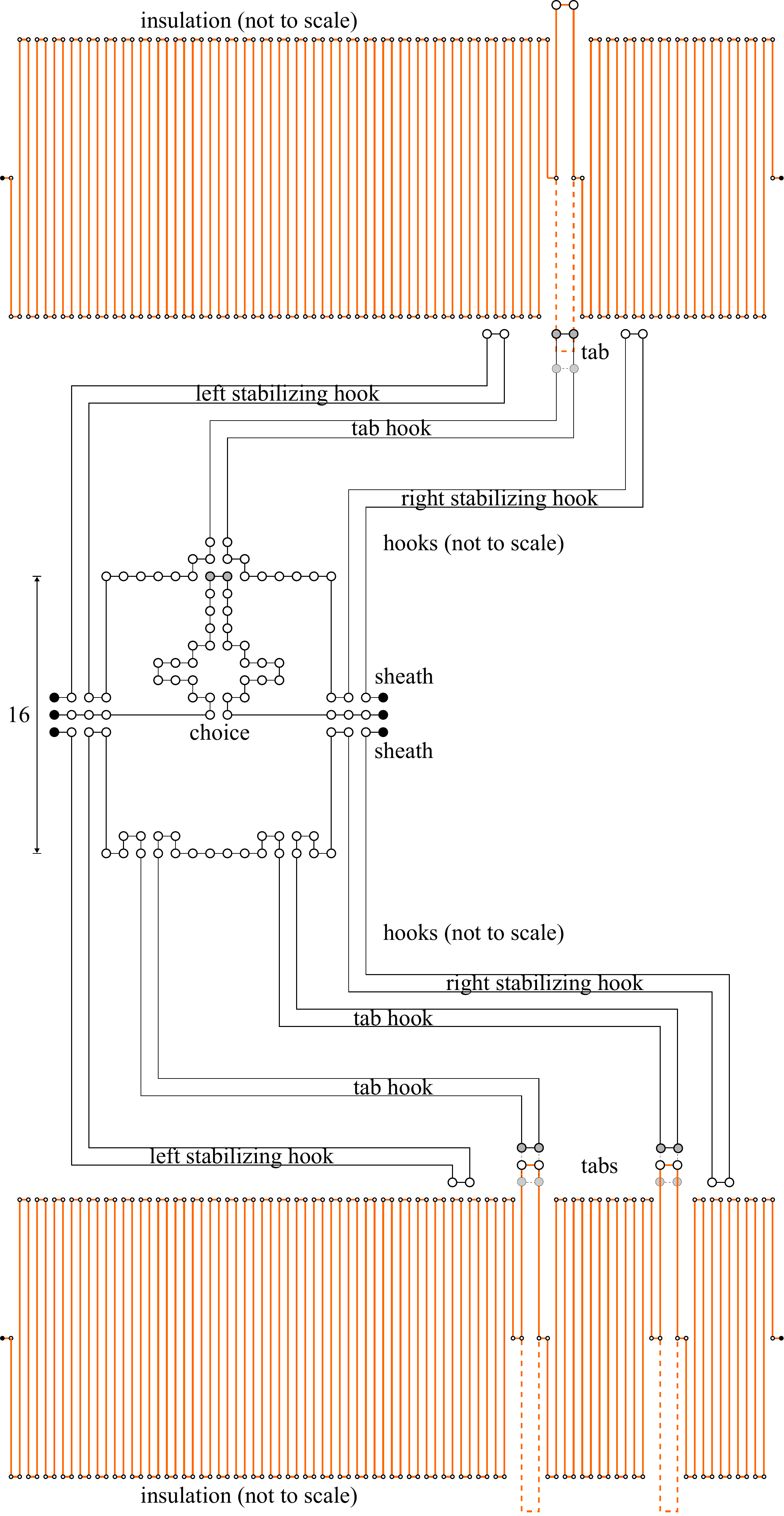}
  \caption{Another version of clause gadget
    (three chains of black segments with black endpoints)
    and its interaction with the neighboring insulation gadgets
    (colored segments).
    This version has two tabs on the bottom and one option on the top,
    and the upper tab extended.
    Each tab shows the unused alternate state with dashed lines.
    (Some vertices are not drawn to simplify the figure.)}
  \label{fig:hook2}
\end{figure}

At minimum, each sheath of the clause gadget has one \defn{stabilizing hook}
immediately left and right of the container,
and each of these hooks has its end adjacent to a grille of the
adjacent insulation gadget.
In addition, the two sheaths of a clause have a total of three \defn{tab hooks}
attached to the container,
for connections to variable gadgets either above or below the clause.
Tab hooks are attached to the sheath (container) via two length-$1$ segments
on either side, allowing for the hooks to extend or retract their \defn{tabs}
(gray vertices) vertically with an offset of~$2$; the attachment is aligned
so that the choice gadget's extended tab forces the corresponding sheath
tab hook to be extended.
The tabs of a tab hook are horizontally aligned with a tab of the
adjacent insulation gadget;
when the tab hook is retracted, the insulation tab is immediately adjacent
(but not intersecting), and when the tab hook is extended,
the insulation tab would intersect and so is forced to flip away.
The horizontal segments of different hooks are separated enough
from each other so that each hook can freely extend or retract.

\figurename s~\ref{fig:hook1} and~\ref{fig:hook2} show two different versions
of the clause gadget.  In general,
for each connection in $G_\phi$ from this clause to a variable gadget
in the row below, we assign one of the choice gadget's
three possible horizontal shifts of its tab to the bottom side,
and add a corresponding tab hook to the bottom sheath that routes the tab to be
horizontally aligned with the corresponding variable,
along with a corresponding tab to the insulation gadget below.
When the choice gadget places its tab in this location,
the tab hook must extend (shift down by~$2$),
forcing the insulation tab to flip down.
(This will enable communication with the variable below the insulation gadget.)
Similarly, for each connection in $G_\phi$ from this clause to
a variable gadget in the row above, we assign one of the choice gadget's
three possible horizontal shifts of its tab to the top side,
add a corresponding hook tab to the top sheath, and
add a corresponding tab to the insulation gadget above the clause.

In \figurename~\ref{fig:hook1}, the clause is connected to three variables
below, and we have extended the leftmost tab and hook;
while in \figurename~\ref{fig:hook2}, the clause is connected to one
variable above and two below, and we have extended the upper tab and hook.
These two variations and their reflections through a horizontal line
are all the cases we need for clauses, as every clause either has all
three connections on one side or splits its connections into a group of one
and a group of two.
(We could additionally re-assign the horizontal shifts of the choice gadget's
tabs between up vs.\ down, but this flexibility is unnecessary.)
In each case, the inner choice gadget has exactly three foldings:
among the six foldings from Claim~\ref{claim:choice},
three of them intersect a horizontal part of a sheath.
For example, in \figurename~\ref{fig:hook1}, the three upper options for the
tab of the choice are prevented by not crossing the upper horizontal sheath.

To force the folding of a clause gadget, we need to specify features of the
overall layout
(already implicit in \figurename s~\ref{fig:hook1} and~\ref{fig:hook2}).
Specifically, the clause gadgets all appear in the leftmost quarter of the
construction, and the variable gadgets all appear in the rightmost quarter.
Thus the hooks' nonunit horizontal segments have length more than half
the width of the construction.

We also constrain the design of hooks as follows.
First, we require the first and last extreme vertical segments
to be at least $50$ longer than the two other vertical segments,
so that the hook's initial vertical travel is significantly longer than
the second vertical travel.
Finally, we require that all vertical segments of a hook
have length at least $\ell_{\min} = \max\{50, 16 m+21\}$.
(Recall that $m$ is the number of clauses.)

\begin{claim} \label{claim:hook}
  Consider a downward hook gadget, with a fixed starting point
  in the left quarter of the construction, such that the intended folding
  descends to $1$ above a tab of the insulation gadget below, or lower.
  Then this hook gadget has a unique folding,
  provided no vertical segment can go $\ell_{\min}$ above the starting vertex.
\end{claim}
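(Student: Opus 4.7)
I would argue that the hook's folding is forced segment by segment, starting from the fixed starting vertex. Label the segments of the doubled U in order as $v_1, h_1, v_2, h_2, v_3, h_3, v_4$, where $v_1$ and $v_4$ are the two ``extreme'' vertical segments (each at least $50$ longer than $v_2$ and $v_3$) and $h_1, h_3$ are the long horizontal segments.

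First, $v_1$ must descend: it has length at least $\ell_{\min}$, and the hypothesis forbids any vertical segment from going $\ell_{\min}$ above the starting vertex. Second, $h_1$ must point to the right, because the hook's starting vertex lies in the left quarter of the construction while $|h_1|$ exceeds half the construction width, so a leftward orientation would exit the bounding box guaranteed by Claim~\ref{claim:frame-contain}. Third, $v_2$ must point upward: further descent would push below the tab that the hook is targeting, and because the intended folding reaches only $1$ above the insulation tab (or just lower), an extra vertical drop of $|v_2| \geq \ell_{\min}$ would penetrate the grilles of the insulation gadget whose positions are forced by Claim~\ref{claim:insulation}. Since $|v_1| \geq |v_2| + 50$, the upward return of $v_2$ stays strictly below the starting row, leaving room for the doubled inner path.

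Fourth, the short segment $h_2$ is forced to connect the two parallel legs of the doubled U with length~$1$: any other orientation would either collide with $h_1$ (which runs one unit below) or split the inner and outer paths apart, preventing the chain from reaching the tab vertices at the prescribed position. The remaining segments $v_3, h_3, v_4$ are then determined by a symmetric argument applied to the other end of the chain: $v_4$ must descend (again by the height hypothesis), $h_3$ must run rightward (again by the left-quarter and length bound), and then $v_3$ must rise to meet $h_2$ without crossing $v_4$. At each turn the only alternative either (a) exits the bounding box, (b) rises $\ell_{\min}$ above the starting vertex, or (c) collides with a previously placed parallel segment of the doubled structure.

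The main obstacle I expect is the bottom of the hook near the tab, where multiple short turns happen in quick succession and naive local arguments are not enough. There I would perform a short case analysis: having fixed $v_1, h_1$ and their counterparts $v_4, h_3$, the chain must reach the two gray tab vertices at a prescribed horizontal and vertical offset, and the only noncrossing connection consistent with the extension offset of $2$ (the ``extend vs.\ retract'' choice that is actually selected by the choice gadget, not by the hook itself) is the intended U-shape. Once this bottom piece is pinned, the remainder of the folding follows immediately from the previously established direction forcings, completing the uniqueness argument.
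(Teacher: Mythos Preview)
Your plan rests on a wrong picture of the hook. The base three-segment path of a hook is not a U (down, right, \emph{up}) but a staircase (down, right, \emph{down}); doubled, the intended folding is
\[
v_1\ \text{down},\quad h_1\ \text{right},\quad v_2\ \text{down},\quad h_2\ \text{right (unit tip)},\quad v_3\ \text{up},\quad h_3\ \text{left},\quad v_4\ \text{up}.
\]
So your step ``$v_2$ must point upward: further descent would push below the tab'' is backwards. After $v_1$ and $h_1$ the chain is still well above the insulation; the intended folding \emph{needs} $v_2$ to continue downward to reach the tab. Likewise your ``$h_3$ must run rightward'' is wrong in the forward direction: the long return segment $h_3$ goes left, and the paper's argument is that going right would exit the frame on the \emph{right} side (since $h_1$ already carried us into the right portion of the construction).

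Because the intended direction of $v_2$ is down, ruling out ``$v_2$ up'' is the hard part, and a one-line argument does not suffice. In the paper's proof this single wrong turn spawns six sub-cases (their foldings (c)--(h)), which fail for heterogeneous reasons: some rise $\ell_{\min}$ above the start, some self-intersect locally, some exit the frame on the right, and one descends $\geq 50$ below the intended bottom and must then be chased through the next few segments to force a collision with a grille. Your plan has no analogue of this analysis. The symmetric ``work inward from both endpoints'' idea is reasonable in spirit, but it only gets off the ground once you have the correct intended directions and the corresponding obstruction for each wrong turn; as written, the argument for $v_2$, $h_2$, $v_3$, and $h_3$ would need to be redone essentially from scratch.
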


\begin{proof}
  Refer to \figurename~\ref{fig:hook-fail}, which enumerates all possible
  foldings of a downward hook gadget up to the first intersection with
  a hypothetical rectangular bounding box.
  For example, folding (a) considers when the initial vertical segment goes
  up instead of down, while all other foldings consider when it goes down;
  folding (b) considers when the second segment goes left instead of right;
  foldings (c--h) consider when the third segment goes up instead of down;
  foldings (i--j) consider when the fourth segment goes left instead of right;
  folding (k) considers when the fifth segment goes down instead of up;
  folding (l) considers when the sixth segment goes right instead of left;
  and folding (m) considers when the seventh segment goes down instead of up.

  In each case other than (n), we argue an impossibility as follows.
  Foldings (d), (f), (i), and (m) have local intersections among segments
  starting at two vertices of distance $1$ from the intersection
  (because every nonunit segment of the hook gadget has another segment
  of length $\pm 1$ longer).
  Foldings (a), (c), and (e) go at least $\ell_{\min}$
  above the starting vertex,
  because every vertical segment of the hook gadget
  is at least $\ell_{\min}$ long.
  Folding (b) crosses the left side of the frame,
  because the starting vertex is in the left quarter of the construction,
  while the second segment of the hook gadget is longer than half the
  width of the construction.
  Foldings (h) and (l) cross the right side of the frame,
  because the second segment going right (as well as the third)
  puts us in the right quarter of the construction,
  while the sixth segment of the hook gadget is longer than
  half the width of the construction.
  Folding (g) descends at least $50$ below the intended bottom of the hook
  gadget (because the first and last segments are at least $50$ longer
  than the other vertical segments),
  while foldings (j) and (k) descend at least $\ell_{\min} \geq 50$ below
  (because the fifth segment has length at least $\ell_{\min}$).
  In any of these three cases, the hook gadget penetrates the
  insulation gadget below by at least~$50$.
  If it is horizontally aligned with a grille, we get immediate intersection.
  If it is horizontally aligned with a tab, we get crossings within the
  next three segments after the hook:
  (1)~if the hook is a tab hook, then the next two segments have length~$1$,
  and the turn immediately after crosses into a horizontally
  adjacent grille;
  (2)~if the hook is a left stabilizing hook, then the next segments have
  length $1$ and $7$ respectively, so again the following turn immediately
  crosses into a horizontally adjacent grille; and
  (3)~if the hook is a right stabilizing hook, then the next segment
  has length $> 1$, so it immediately crosses into a horizontally
  adjacent grille.
  Thus the only remaining folding is the intended folding (n).
\end{proof}

\begin{figure}\centering
  \includegraphics[scale=0.3]{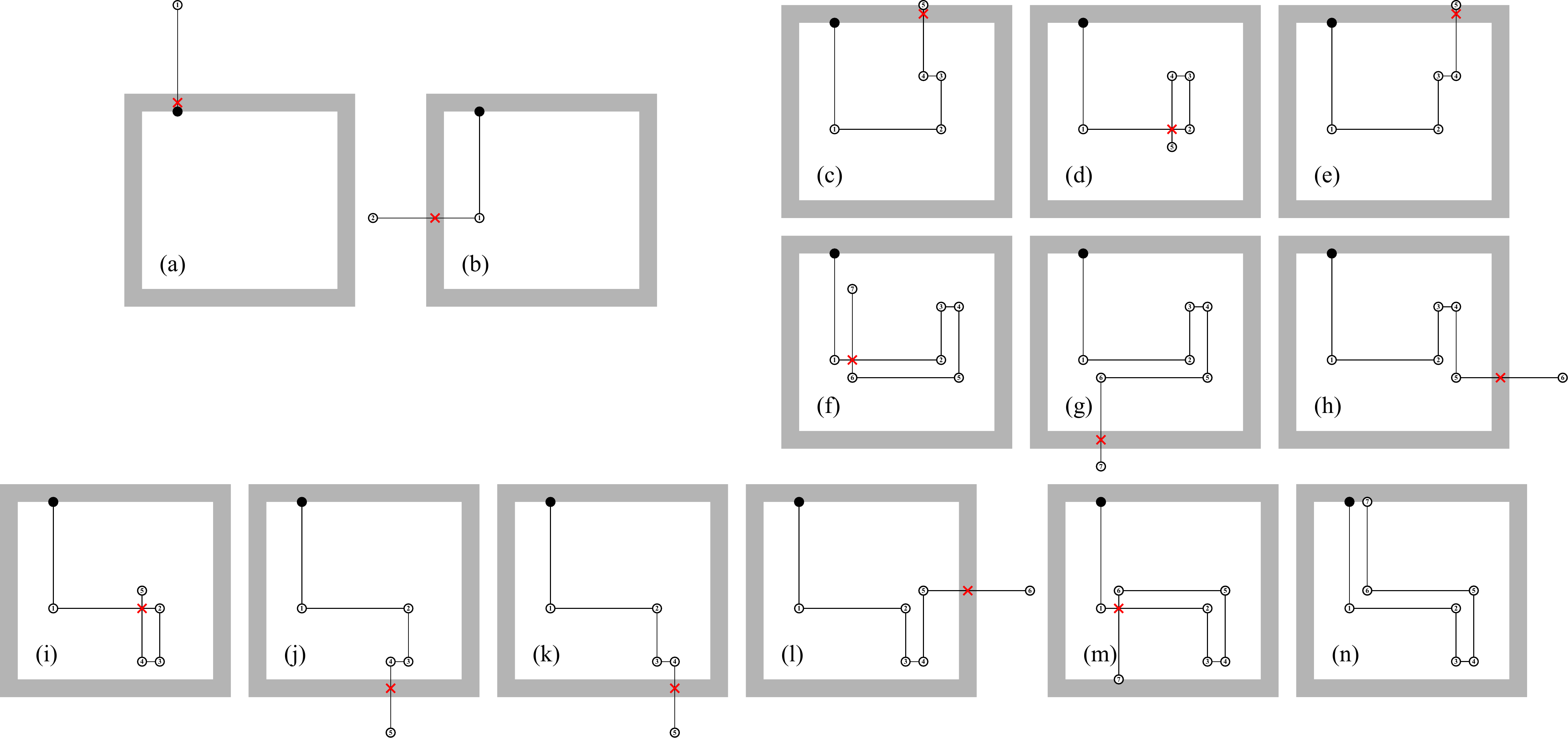}
  \caption{A hook gadget can fold in only one way (n)
    if it is constrained to lie within a rectangle (gray).}
  \label{fig:hook-fail}
\end{figure}

We argue that the clause gadgets must be folded as intended
in \figurename s~\ref{fig:hook1} and \ref{fig:hook2},
from left to right within the three rows of gadgets they occupy.
The leftmost endpoints of the rows are forced by
Claim~\ref{claim:spiral}, and the next claim allows us
to induct through all the clause gadgets.

\begin{claim} \label{claim:clause}
  Assume the left three black vertices of a clause gadget are placed as
  in the intended foldings
  (shown in \figurename s~\ref{fig:hook1} and \ref{fig:hook2})
  and the three chains start rightward.
  Then any folding must place the right three black vertices
  as in the intended folding; and at least one tab hook
  must be extended into the corresponding tab of the insulation gadget.
\end{claim}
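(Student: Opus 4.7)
The plan is to determine the folding piece by piece, working outward from the pinned left endpoints of the three chains, using the already-established structural claims (Claims 2, 3, and 4) to force each component.

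First I would handle the two sheath chains. Starting at their leftmost black endpoints (placed as intended and starting rightward), the first part of each sheath chain is a left stabilizing hook; by Claim \ref{claim:hook}, each of these hooks has a unique downward (respectively upward) folding, since the starting position is in the left quarter and the bound $\ell_{\min}$ forbids going too far above. After the stabilizing hooks, each sheath continues with a long horizontal segment that must travel all the way across; it cannot turn early, because turning prematurely would either force a subsequent hook-like subchain into a configuration ruled out by Claim \ref{claim:hook}, or collide with the opposite sheath or with the grilles of the insulation gadget above/below (whose positions are fixed by Claim \ref{claim:insulation} up to the local flip of each tab). In this way the two sheaths trace out their intended rectangular container, with tab hooks protruding outward in the prescribed horizontal positions; again each tab hook has only its intended folding by Claim \ref{claim:hook}, modulo the binary extend/retract degree of freedom allowed by the two length-$1$ attachment segments.

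Next I would analyze the middle choice chain. Its left endpoint is pinned as intended and its first segment goes rightward, so the attached long horizontal segment carries it into the interior of the sheath container. Inside this container the chain must execute the choice gadget, whose left and right endpoints are now effectively pinned to horizontally adjacent grid points with forced turn direction (the turn direction follows from the horizontal segments above and below the choice gadget being blocked by the sheath). Claim \ref{claim:choice} then yields exactly six possible tab placements. Three of these place the tab on the side containing a long horizontal sheath segment, causing intersection with the sheath, and are therefore excluded. The remaining three placements put the tab on the side with the three tab hooks, in the three horizontal positions that have been designed to align precisely with those tab hooks.

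Finally I would argue the forcing of the tab hook and of the right endpoints. In each of the three surviving choice configurations, the extended tab of the choice chain occupies the two grid points that would be occupied by the corresponding tab hook in its retracted state; hence that tab hook must be in its extended state. The extended tab then coincides with a point inside the adjacent insulation tab in its default position, so the insulation tab must flip (this flip is permitted by Claim \ref{claim:insulation}). Once the choice chain has passed the choice gadget, it continues rightward by another long horizontal segment to the intended right black vertex. Symmetrically, the right ends of the two sheath chains are reached through right stabilizing hooks whose foldings are again unique by Claim \ref{claim:hook}. The main obstacle I anticipate is the bookkeeping in the second step: making sure that no alternative routing of the two sheath chains or of the choice chain can avoid the intended container --- in particular, ruling out configurations where a sheath detours around the choice chain, or the choice chain tries to escape between two tab hooks, using only the widths provided by the grilles and the length lower bound $\ell_{\min}$ on hook segments.
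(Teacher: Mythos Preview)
Your outline has the right high-level shape (sheaths first via Claim~\ref{claim:hook}, then the choice chain via Claim~\ref{claim:choice}), but two load-bearing steps are missing, and without them the argument does not go through.

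First, you never verify the hypothesis of Claim~\ref{claim:hook}. That claim requires, as an assumption, that ``no vertical segment can go $\ell_{\min}$ above the starting vertex''; the bound $\ell_{\min}$ is a \emph{lower} bound on hook-segment lengths, not an upper bound on how high the hook can reach, so it does not by itself forbid anything. In the paper's argument this hypothesis is discharged by a separate observation: the entire row's choice chain has only $16m$ total vertical length, so if a sheath hook segment reached height $y=16m$ it would sever the choice chain's two spiral endpoints from each other. One then tracks the $y$-coordinate of the sheath segment by segment ($y=-1,-7,-9,\ldots$, down to at worst $-21$) and checks at each hook that $-21+\ell_{\min}\geq 16m$, which is exactly why $\ell_{\min}=\max\{50,16m+21\}$. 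Your proposal skips this entirely.

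Second, and more seriously, you assert that the choice gadget's right endpoint is ``effectively pinned to a horizontally adjacent grid point'', but nothing you have said establishes that. Locally one can only show that the choice subchain has width \emph{at least} $20$ (otherwise the two length-$9$ horizontal segments collide). To rule out width $>20$ the paper makes a \emph{global} argument over the whole row: the spiral gadget fixes the total width of all clause choice chains in the row, so if one clause used more than $20$ another would have to use less, contradicting the local lower bound. Only after this conservation argument are the choice endpoints known to be adjacent, and only then does Claim~\ref{claim:choice} apply. The same global device is also what forces the sheath right endpoints to land at exactly $y=\pm 1$ (via the grille alignment of the right stabilizing hook together with the need to leave a single row for the choice chain to exit). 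Your ``cannot turn early'' heuristic does not substitute for this; a priori the sheath could drift vertically by a few units through the extend/retract slack of the tab hooks, and pinning it back requires the grille-alignment and width-conservation arguments.
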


\begin{proof}
  Define the $y=0$ line to be the horizontal line through
  the left endpoint of the clause gadget's choice chain, or equivalently,
  the left endpoint of the entire row's choice chain.

  First we show that no vertical segment of a bottom hook gadget
  can go up to $y = 16 m$, which is useful for applying Claim~\ref{claim:hook}.
  If it did, then the entire row's choice chain would be unable to connect
  its two endpoints (where the choice chain connects to the spiral):
  the total length of its vertical segments is exactly $16 m$,
  so it can reach only $16 m$ above the origin,
  so it could never reach above the wayward vertical segment.

  Consider the bottom sheath from left to right.
  The left stabilizing hook starts at $y=-1$,
  so Claim~\ref{claim:hook} guarantees the correct folding,
  because no vertical segment can go up to
  $y=-1+\ell_{\min} \geq 16 m$.
  The next segment (which has length $1$)
  must go right to avoid intersecting the first segment of the sheath.
  The next segment (which has length $7$) must go down, to $y=-8$,
  to avoid intersecting the choice chain.
  The next segment must go right to avoid intersecting the first hook.

  Now consider the zero, one, two, or three tab hooks on this sheath,
  one at a time from left to right.
  Each tab hook is prefixed by a vertical segment of length~$1$,
  which can go up or down according to whether the second hook
  is retracted or extended, and a horizontal segment of length~$1$,
  which we will argue must go right.
  The first tab hook starts at $y=-7$ or $y=-9$,
  so Claim~\ref{claim:hook} guarantees the correct folding,
  because no vertical segment can go up to $y=-9+\ell_{\min} \geq 16 m$.
  This correct folding implies that the prefix horizontal segment
  goes right; if it instead went left, then its right endpoint would be
  where the tab hook gadget ends, causing an intersection.
  The tab hook is suffixed by a horizontal segment of length~$1$,
  which must go right to avoid immediately intersecting the hook,
  and a vertical segment of length $1$, which we will argue must go
  the opposite direction of the prefix vertical segment.
  But even if both vertical segments surrounding tab hooks all go down,
  we reach $y=-10$ after the first tab hook, $y=-12$ after the second tab
  hook, and $y=-14$ after the third tab hook (if they exist).
  All of the tab hooks start at $y \geq -12$,
  so Claim~\ref{claim:hook} guarantees the correct folding,
  because no vertical segment can go up to $y=-12+\ell_{\min} \geq 16 m$.

  After the hooks, we have a horizontal segment,
  which must go right to avoid intersecting the previous hook
  (the left stabilizing hook if there are no bottom tab hooks);
  a vertical segment of length~$7$, which we will argue goes up;
  and a horizontal segment of length~$1$, which we will argue goes right.
  Thus we must be at $y \geq -21$ when we reach the right stabilizing hook.
  Claim~\ref{claim:hook} guarantees the correct folding,
  because no vertical segment can go up to
  $y = -21 + \ell_{\min} \geq 16 m$.
  This correct folding implies that the preceding horizontal segment of length
  $1$ goes right; if it instead went left, then its right endpoint would be
  where the stabilizing hook gadget ends, causing an intersection.
  Finally we have a horizontal segment of length~$1$,
  which must go right to avoid immediately intersecting the
  right stabilizing hook.

  At this point, we have guaranteed that all bottom hook gadgets fold
  correctly, and all other horizontal segments of the sheath go right.
  Thus we have determined the $x$ coordinates of the entire bottom sheath
  to be as in the intended foldings of
  \figurename s~\ref{fig:hook1} and \ref{fig:hook2}.
  In particular, the tip of the right stabilizing hook is
  (as in the intended folding)
  horizontally aligned with a grille of the insulation gadget below,
  so to avoid collision the $y$ coordinate must be
  strictly above the insulation gadget, i.e.,
  no lower than in the intended folding.
  This guarantees that the starting $y$ coordinate for the right stabilizing
  hook is no lower than in the intended folding, i.e., at $y \geq -1$.
  By applying the same argument to the top sheath, its final endpoint is at
  $y \leq 1$.
  To leave room for the choice chain to exit on the right
  (necessary to reach the right endpoint of its row where it meets the spiral),
  the bottom and top sheaths must in fact end at exactly $y=-1$ and $y=1$,
  respectively.

  Now we analyze the starting $y$ coordinate for the bottom tab hooks.
  We have already argued that the leftmost tab hook
  starts at $y=-7$ (retracted) or $y=-9$ (extended).
  Now that we have fixed the folding of the right stabilizing hook,
  we determine the vertical segment of length $7$ that precedes it;
  in particular, it must go up to reach $y=-1$.
  Thus the rightmost tab hook must start at $y=-7$ or $y=-9$.
  The remaining case is when there are three bottom tab hooks.
  If the leftmost and rightmost tab hooks are both retracted ($y=-7$)
  or extended ($y=-9$), then the middle tab hook could reach
  $y=-5$ or $y=-11$, respectively.
  The first case $y=-5$ is impossible, because then all three tab hooks
  are retracted, so the choice gadget must intersect by
  Claim~\ref{claim:choice}.
  The second case $y=-11$ acts the same as $y=-9$:
  it is still extended into the corresponding tab of the insulation gadget
  below.

  A symmetric argument determines the folding of the top sheath.

  Finally we analyze the choice chain.
  Because there is only a single $y$ coordinate of space between
  the top and bottom sheaths on the right, namely $y=0$,
  the final horizontal segment of length $9$ must be at $y=0$.
  To avoid intersecting the first horizontal segment of length~$9$,
  the $x$ coordinate of the right endpoint of the choice chain
  must be at or right of where it is in the intended folding
  ($19$~right of the left endpoint).
  Now consider the entire row's choice chain,
  whose overall width is determined by the spiral gadget.
  The overall width is the sum of the widths of the individual clauses'
  choice chains.
  If any individual clause's choice chain were longer than
  the $20$ points used by the intended folding,
  then by conservation of the sum, some other clause's
  choice chain would have to be shorter, which we have argued is impossible.
  Therefore every clause's choice chain has a width of exactly $20$ points.
  Thus the endpoints of the choice gadget must be horizontally adjacent
  as in the intended folding, so Claim~\ref{claim:choice} guarantees that
  at least one tab is extended.
\end{proof}

\subsection{Variable Gadget}

\figurename~\ref{fig:variable} illustrates the \defn{variable gadget}.
For a variable $v$ that appears in $k$ clauses
(as positive literal $v$ or negative literal $\neg v$),
the variable gadget consists of two \defn{zig-zag paths}
occupying a (dotted) rectangle of points of width $3k+1$ and height~$3$.
The two zig-zag paths are joined by a vertical length-$3$ \defn{cap},
followed by a horizontal \defn{baseline} of length $3k+4$,
followed by another vertical length-$3$ cap.
In the intended folding, the baseline is in the middle of the available space
(aligned with the black endpoints), separating the upper and lower zig-zag
paths, which are folded to look identical (so measured along the chain,
they are reversals of each other).
At the beginning and end of the chain, we add a \defn{bookend}
consisting of vertical segments of length $3$, $6$, and $4$;
horizontal segments of length $2$ in between the vertical segments;
a horizontal segment of length $2$ incident to the black endpoint;
and a horizontal segment of length $3$ on the other end of the bookend.
The two intended solutions of the entire variable gadget are the
one shown in \figurename~\ref{fig:variable}
(corresponding to setting the variable to true)
and its reflection through the baseline
(corresponding to setting the variable to false).

\begin{figure}[th]\centering
\includegraphics[scale=0.5]{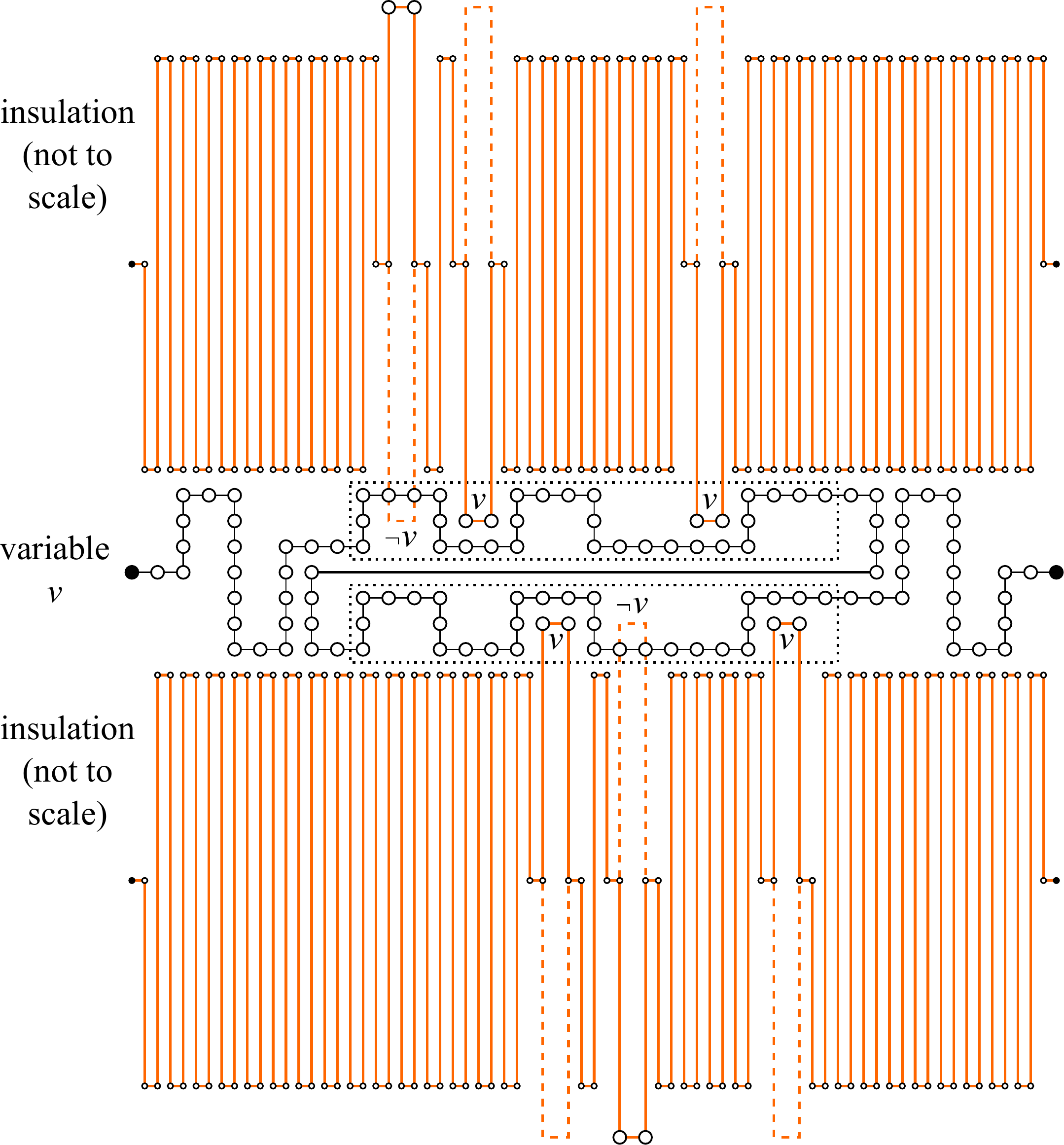}
\caption{A variable gadget for a variable $v$ that appears five times as
  $v$, $\neg v$, $\neg v$, $v$, $\neg v$, and~$\neg v$
  coming from tabs above, above, below, below, above, and below, respectively.
  The dotted areas outline the two zig-zag paths.
  (Some straight vertices are not drawn to simplify the figure.)}
\label{fig:variable}
\end{figure}

Both zig-zag paths contain a horizontal segment of length $3$
(connecting four vertices) for each appearance of the variable.
The heights of the segments on the upper and lower zig-zag paths,
measured from the baseline,
are either $3$ and $-1$ respectively, or $1$ and $-3$ respectively.
Which option depends on whether the corresponding literal
uses the variable in its positive or negative form,
and on whether the clause that uses the literal is in a row
above or below this one.
The heights are $(1,-3)$ if and only if either the literal is $v$
and the clause is above,
or the literal is $\neg v$ and the clause is below;
in \figurename~\ref{fig:variable},
these are the second, fourth, and fifth pairs of horizontal segments.
In the other cases, the heights are $(3,-1)$;
in \figurename~\ref{fig:variable},
these are the first, third, and sixth pairs of horizontal segments.
We add vertical segments of length $2$ to transition when necessary
between heights $1$ and $3$ on the upper zig-zag path,
and between heights $-1$ and $-3$ on the lower zig-zag path.
Notably, we do not add vertical segments between length-$3$ horizontal
segments of the matching height;
in \figurename~\ref{fig:variable},
this occurs between the fourth and fifth pairs of horizontal segments.

For each occurrence of the variable in a clause in a row above,
we place a corresponding tab of the insulation gadget immediately above
the variable gadget.  This tab is horizontally aligned with the middle two
vertices of the length-$3$ horizontal segment of the upper zig-zag path.
When the tab is down (corresponding to a clause choosing this variable
to satisfy it), it comes down to height~$2$, which forces the variable to
flip so that this horizontal segment has height~$1$.
Similarly, for each occurrence of the variable in a clause in a row below,
we place a corresponding tab of the insulation gadget immediately below
that is horizontally aligned with the middle two vertices of the
length-$3$ horizontal segment, and which comes up to height~$-2$ when up.
This vertical alignment of the insulation gadgets places the grilles of the
insulation gadgets above and below at heights $4$ and $-4$ respectively.

\begin{claim} \label{claim:variable}
  For each variable gadget in a row,
  the only valid foldings are the one in \figurename~\ref{fig:variable}
  and its reflection through the baseline.
\end{claim}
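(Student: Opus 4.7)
The plan is to work from the outside in, using the previously established gadget claims to pin down the variable gadget's folding in stages.

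First, I would appeal to Claim~\ref{claim:spiral} to fix the positions of the two black endpoints of the variable gadget, and to Claim~\ref{claim:insulation} applied to the rows immediately above and below to conclude that the variable gadget is confined to a horizontal strip of the height shown in \figurename~\ref{fig:variable}: the insulation grilles above and below cap the strip at vertical distance $4$ from the baseline, and the insulation tabs intrude to distance $2$ at precisely the columns aligned with the middle of each length-$3$ horizontal zig-zag segment.

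Second, I would analyze the bookend. Starting from the black endpoint (whose position and initial horizontal direction are fixed), I would enumerate sign patterns of the three vertical segments of lengths $3$, $6$, $4$ together with their intervening length-$2$ horizontal segments. Any pattern with all three pointing the same way, or either of the two non-intended alternating patterns, either exceeds the available vertical strip, collides with a neighboring grille, or leaves the bookend at a height incompatible with the length-$3$ cap and length-$(3k+4)$ baseline that must follow. This leaves exactly one bookend folding up to reflection through the baseline, giving two global orientations; a symmetric argument applies to the right bookend. I would then argue that the baseline is forced to be horizontal (each cap is a vertical length-$3$ segment, and the strip leaves no room to detour), so the two bookends are coupled: once the left bookend is folded ``up'' or ``down'', the right one is determined.

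Third, I would pin down each horizontal length-$3$ segment of the zig-zag. Such a segment must lie on the same side of the baseline as the corresponding cap that reached it --- otherwise it would either cross the baseline chain or exit the strip --- so the two zig-zag paths are trapped on opposite sides of the baseline, with the sign sides selected by the global bookend choice. Within each zig-zag, the design places a length-$2$ vertical transition exactly where consecutive horizontal segments have different designed heights (magnitudes $1$ vs.\ $3$), and no transition where the heights agree. Any attempt to place a horizontal segment at a height different from its designed magnitude therefore either creates a vertical gap that the chain cannot span, causes an immediate collision with the adjacent length-$2$ transition, or puts the segment at magnitude $1$ where the neighboring insulation tab on that side is forced down (or up) to distance $2$ and would cross it; conversely, placing it at magnitude $3$ when its neighbor tab is forced in creates the same crossing.

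The main obstacle will be this last step: ruling out ``mixed'' configurations in which individual horizontal segments within a single zig-zag path flip between their two admissible magnitudes independently. The key observation is the rigid pairing between the designed length-$2$ vertical transitions and the intended height changes, together with the fact that the insulation tabs occupy precisely the height-$2$ slots that a wrongly-heightened segment would want to use. Once this rigidity is in hand, the only remaining freedom is the binary bookend choice, giving exactly the folding of \figurename~\ref{fig:variable} and its reflection through the baseline.
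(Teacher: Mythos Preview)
Your outline has a real gap at the very first step. Claim~\ref{claim:spiral} pins only the two endpoints where the \emph{row} of variable gadgets meets the spiral; it says nothing about the black endpoints of an individual variable gadget inside that row. Several variable gadgets are concatenated, so a priori one of them could fold wider than intended and another narrower. The paper handles this with a global argument you omit: it first shows (from the bookend analysis alone) that each gadget's endpoints must lie at height~$0$ and that the length-$6$ bookend segment acts as a left-to-right barrier, then observes that the total width of all gadgets in the row is fixed by the spiral, picks a gadget of minimum width, and derives a contradiction by bounding the horizontal position of both baseline endpoints relative to the bookends. Only after this pigeonhole step are the individual endpoints and the baseline's horizontal location pinned.

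Two further steps in your sketch are also too loose. You assert the baseline ``is forced to be horizontal'' but never determine its height; the paper needs a parity argument (the zig-zag verticals are all even, the cap is odd) to restrict the baseline to heights $-2,0,2$, and then a separate collision argument with the bookend's final horizontal segment to rule out $\pm 2$. Finally, your zig-zag argument leans on insulation tabs being ``forced down (or up)'' to block wrong heights, but the tabs are \emph{not} forced in this proof --- they have a free binary state, and the claim must hold for either state. The paper instead uses the now-pinned baseline at height~$0$ as a barrier separating the two zig-zag regions, and then forces each zig-zag segment-by-segment (right to avoid the bookend, up to avoid the baseline, right to preserve connectivity, down to avoid the grilles, \dots), never invoking tab positions.
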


\begin{proof}
  First we analyze the five-segment bookend at the beginning and end
  of each variable gadget's chain.
  None of the vertical segments could stick into
  the width-$2$ gap of a tab in an insulation gadget, because then the
  incident horizontal segment of length at least $2$ would intersect a grille.
  Thus these vertical segments lie within the height range $\pm 3$
  (where height $0$ denotes the middle of the space between insulation grilles),
  so the middle vertical segment of length $6$ occupies the full height
  range, and the first and last vertical segments (of length~$3$)
  must go in opposite directions as the middle vertical segment.
  In particular, the endpoints of the variable gadget must be at height~$0$.
  Furthermore, the four horizontal segments of a bookend
  must all go the same direction to avoid local intersection
  among the vertical segments, and this direction must be right
  to enable eventual connection from the left endpoint of the row
  to the right endpoint of the row (otherwise the length-$6$ vertical
  segment would cut them off).

  Now we make a global argument about the variable gadgets in the row,
  and select a specific variable gadget to consider.
  Define the \defn{width} of a variable gadget to be the signed
  horizontal distance between its endpoints,
  which must be positive because of the length-$6$ vertical segments
  serving as left-to-right barriers.
  The total width of the variable gadgets is fixed by the spiral gadget
  and Claim~\ref{claim:spiral}.
  If any variable gadget had width greater than the width of the intended
  foldings, then the width of some other variable gadget would be less than
  intended.
  Consider a variable gadget whose width is less or equal to
  the intended width.

  We claim that the left endpoint of the baseline is at least $1$ unit right of
  the last vertical segment (of length~$4$) of the left bookend.
  First, it must be right of the middle vertical segment (of length~$6$)
  of the left bookend, so there are only two intervening $x$ coordinates
  to consider.
  In either case, the baseline cannot have height within the height range
  of the last vertical segment (of length~$4$) of the left bookend.
  This leaves just two possible heights: $\pm 2$ and $\pm 3$
  (where $\pm$ depends on whether the left bookend is reflected).
  Assume by symmetry that the left bookend is folded as in
  \figurename~\ref{fig:variable}, so that the baseline has
  height $2$ or $3$ if not~$0$.
  We use that the baseline has, on both endpoints and hence on the left
  endpoint, an incident cap segment of length~$3$.
  This cap segment must go down from the baseline:
  going up would intersect a grille, or enter a tab gap which would then
  cause an intersection because the next horizontal segment has length
  at least~$2$.
  To avoid intersecting the final vertical segment of the left bookend,
  this forces the left endpoint of the baseline to be in the column
  in between the middle segment (of length~$6$) and final segment
  (of length~$4$) of the left bookend.
  This folding wedges the cap segment
  in between the middle and last vertical segments of the left bookend,
  so the next horizontal edge intersects one of those vertical segments.

  By a symmetric argument, the right endpoint of the baseline
  is at least $1$ unit left of the first vertical segment (of length~$4$)
  of the right bookend.
  These two bounds on the horizontal location of the baseline's endpoints
  contradict each other if the variable gadget has width smaller than intended.
  Therefore this variable gadget, and all variable gadgets,
  have the intended width.
  Furthermore, by the bounds on the endpoints,
  the baseline's horizontal location is exactly as in the intended folding.

  Next we argue that the baseline has an even height: $-2$, $0$, or $2$.
  We have already fixed the height of the right endpoint of the
  left bookend to be~$1$, which is odd.
  The vertical segments of the intervening zig-zag path are all length~$2$,
  which is even.  The only odd-length vertical segment between the left bookend
  and the baseline is the length-$3$ segment incident to the baseline.
  Thus the height of the baseline is even.

  Now we show that the baseline cannot have height $\pm 2$,
  leaving only height~$0$.
  If the baseline had height~$-2$, then the length-$3$ cap segment incident
  to the left endpoint must go up (to avoid the insulation below,
  as argued earlier) to height~$1$.
  Similarly, if the baseline had height~$2$, then the length-$3$ cap segment
  incident to the left endpoint must go down to height~$-1$.
  In either case, the cap segment intersects
  the final horizontal segment of the left bookend, which has height~$\pm 1$.

  At this point, we have determined the horizontal and vertical location
  of the baseline to be as in the intended foldings.
  We have also determined the foldings of the left and right bookends,
  up to reflection through the baseline.
  The baseline and bookends thus decompose the available space into two
  disjoint regions, roughly above and below the baseline.
  The first zig-zag path (attached to the left bookend)
  must start by going right (to avoid intersecting the middle vertical segment
  of the left bookend), then go up (to avoid intersecting the baseline),
  then go right (to avoid cutting off connectivity to the right endpoint),
  then go down (to avoid intersecting the insulation, which would cause
  intersection because the horizontal segments all have length at least~$4$),
  and so on.
  The second zig-zag path is similarly forced,
  which forces the right bookend to be reflected opposite from the
  left bookend (as in \figurename~\ref{fig:variable}).
  Therefore we have determined the entire folding of the variable gadget,
  up to the reflection of the left bookend, which reflects the entire
  folding through the baseline.
\end{proof}

One issue can arise when connecting clause gadgets to variable gadgets:
when multiple clauses connect via tab hooks to the same side of a variable
gadget, they also need to place two stabilizing hooks against a grille
at the transition point.  Given that the horizontal alignment of insulation
tabs is controlled by the variable gadget, we need to leave horizontal room
for such stabilizing hooks.  We can do so by adding a ``null'' occurrence of
the variable that is not used by any clauses, and has no corresponding
stabilizing hook; instead, it is surrounded by grilles.
These null occurrences can have zig-zag path heights of
$3$ and $-1$ respectively, or $1$ and $-3$ respectively; it does not matter.
For example, in Figure~\ref{fig:sample},
this modification occurs at the bottom of $v_3$ and
at the top of~$v'_3$.

\subsection{Putting Gadgets Together}

\figurename~\ref{fig:sample} shows how all the gadgets fit together for
an example instance.
We join together all upper halves of hook gadgets for $c_1,c_2,\ldots,c_m$;
all clause gadgets (and their flaps) for $c_m,c_{m-1},\ldots,c_1$;
all lower halves of the hook gadgets for $c_1,c_2,\ldots,c_m$;
and all variable gadgets for $v_1,v_2,\ldots,v_n$, in these orders.
Finally, we apply the frame gadget to this closed chain and the intended
bounding box,
with an edge on a path joining the upper halves of the hook gadgets,
or an edge on a path joining the variable gadgets. 

Here we claim that we always have such an edge visible from the outside.
When the Hamiltonian cycle $\kappa$ has an edge visible from the outside, we can use any one of them.
(In \figurename~\ref{fig:sat}, the edges $\{c_1,c_2\}$ and $\{v_1,v_2\}$ are visible.)
Otherwise, we have a clause vertex $c_i \in C$ on the outer boundary. 
Then we can take an edge from the upper half of the hook gadget for the clause $c_i$.
This completes the construction.

\subsection{Correctness}

Now we conclude the proof of Theorem~\ref{th:loop}.
This reduction can be done in time polynomial in the size of~$\phi$:
we apply the Pilz reduction to draw $\phi$ in the spiral fashion
with connections only between adjacent rows, push the clauses to the
left of each row and push the variables to the right of each row,
and route the hook connections in a planar fashion ---
with the horizontal part of all hooks in a row stacking up vertically,
while leaving enough vertical space between them to allow for hooks to
freely extend or retract.
It remains to show that an instance $\phi$ of
planar 3SAT is satisfiable if and only if the resulting
fixed-angle orthogonal equilateral closed chain has a planar embedding.

When the linked planar 3SAT instance is satisfiable,
at least one literal of each clause is satisfied by the assignment.
The clause gadget then chooses the corresponding tab location for the
choice gadget, and extends the corresponding hook gadget,
while retracting the other tabs.
The adjacent insulation gadget then flips its corresponding tab
to avoid overlap.
Reflecting the variable gadget into the assignment corresponding to the
literal means that this insulation tab will not intersect the
variable gadget.
This folding avoids all intersections because the assignment is satisfied.

On the other hand, when the loop has an embedding,
the frame gadget folds into the intended bounding box
by Claim~\ref{claim:frame-contain},
and the spiral gadget folds as intended by Claim~\ref{claim:spiral}.
Each insulation gadget folds as intended by Claim~\ref{claim:insulation}.
Each row of clause gadgets folds as intended by
Claim~\ref{claim:clause}, and each row of variable gadgets
folds as intended by Claim~\ref{claim:variable}.
In particular, by Claim~\ref{claim:clause}, at least one tab hook
from each clause must be extended, which forces the corresponding
insulation tab to be folded into the corresponding variable gadget,
which forces the variable to have the satisfying assignment.
Therefore, the instance of planar 3SAT is satisfiable.

\section{HP Optimal Folding a Fixed-Angle Orthogonal Equilateral Open Chain is Strongly NP-complete}

We now turn to orthogonal equilateral open chains in the HP model,
where the vertices are bicolored $\Red$ or $\Blue$,
and we wish to find a noncrossing configuration in 2D
that maximizes the number of H--H contacts.
In this section, we prove that this problem is NP-complete, despite the chain being open:
\begin{theorem}\label{th:chain}
HP optimal folding of a bicolored fixed-angle orthogonal
equilateral open chain is strongly NP-complete,
even if the chain has just two $\Red$ vertices.
\end{theorem}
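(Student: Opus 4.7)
The plan is to reduce from the strongly NP-complete closed-chain embedding problem of Theorem~\ref{th:loop}. Membership in NP follows exactly as in Section~\ref{sec:NP}: guess a left-or-right turn at each $90^\circ$ vertex, construct the resulting embedding in polynomial time, and check both the absence of self-crossings and the count of $\Red$--$\Red$ adjacencies.

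For hardness, given a closed chain $C$ produced by the reduction of Theorem~\ref{th:loop}, I would delete one carefully chosen edge $e = \{u,v\}$ of $C$ to obtain an open chain $C^{\star}$, color its two endpoints $u$ and $v$ as $\Red$, and color every other vertex as $\Blue$. Because there are only two $\Red$ vertices, the HP optimum is at most $1$, and it is attained if and only if $u$ and $v$ land at adjacent grid points in some noncrossing configuration of $C^{\star}$. Hence the HP-optimization problem on $C^{\star}$ decides exactly whether $C^{\star}$ has a noncrossing configuration with $u$ and $v$ at unit distance, which I claim is equivalent to $C$ having a noncrossing embedding.

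The forward direction is immediate: any noncrossing embedding of $C$ becomes, after deletion of $e$, a noncrossing embedding of $C^{\star}$ in which $u$ and $v$ are still adjacent, giving one $\Red$--$\Red$ contact. The reverse direction is the delicate one, because $u$ and $v$ are endpoints of $C^{\star}$ and thus their incident angles are free, while in $C$ they carry fixed $\theta$-angles; a priori an open-chain embedding might place $u,v$ adjacent in a geometry that does not extend to a closed-chain embedding of~$C$. To handle this, I would choose $e$ on the outer boundary of the frame gadget of Claim~\ref{claim:frame-loop}. In any noncrossing configuration of $C^{\star}$ with $u,v$ adjacent, we can virtually reinsert the unit edge $\{u,v\}$ and thereby close the chain. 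The signed horizontal and vertical displacements around this virtual cycle must sum to $0$, so the $\pm 2 \pm 1 \pm 1 \equiv \text{(virtual edge component)} \pmod 5$ analysis of Claim~\ref{claim:frame-loop} applies; I will argue that the only unit direction for $v-u$ consistent with the horizontal and vertical mod-$5$ constraints together with the requirement that the outer frame form a simple enclosing loop is the intended direction of $e$ in $C$. Once this is established, Claim~\ref{claim:frame-loop} forces the frame into its intended folding, Claim~\ref{claim:spiral} forces the spiral, and Claims~\ref{claim:insulation}, \ref{claim:clause}, \ref{claim:variable} apply verbatim to force the rest of $C^{\star}$ into the intended folding. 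In particular $u$ and $v$ sit at the very positions (and with the very incident directions) they occupy in every embedding of $C$, so reinserting $\{u,v\}$ respects the original fixed angles there and yields a valid noncrossing embedding of $C$.

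The main obstacle I expect is extending the frame uniqueness argument to the cut setting: for each of the at most eight outer-frame orientations compatible with some choice of unit virtual direction $v-u$, I must rule out all but the intended one by global geometric arguments (showing that alternative orientations either fail to form a simple loop enclosing the interior or collide with chain segments that, because everything off the frame lies on the scaled $5\times 5$ grid, cannot navigate around them). Once this single local analysis at the cut site is carried out, the rest of the correctness proof mirrors that of Theorem~\ref{th:loop} with no further modification.
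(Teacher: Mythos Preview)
The approaches differ. The paper does not simply cut an edge of the Theorem~\ref{th:loop} chain; instead it replaces the frame gadget by a modified one (\figurename~\ref{fig:frame}) in which the bottom doubled segment is extended to length exceeding $10L$, where $L$ is the total length of the scaled interior chain $5C$, and the chain is opened at the far left of this doubled segment. The two $\Red$ vertices are those far endpoints. A one-line length argument then forces the two long segments to lie parallel in any configuration realizing the $\Red$--$\Red$ contact: if they were perpendicular or antiparallel, their inner ends would be more than $10L$ apart, yet the remainder of the chain has total length at most $9L$ and so cannot connect them. Once the long segments are parallel the frame behaves exactly as in Claim~\ref{claim:frame-loop}, and the rest of the proof is unchanged.

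Your approach --- cut a single edge on the existing outer frame and rerun the mod-$5$ analysis with a virtual edge --- is plausible in spirit, but the sketch has a real gap beyond ``do the case analysis.'' Cutting an edge $e$ does not merely add a $\pm 1$ virtual contribution to the parity equations: it also shortens (or splits) the outer-frame segment containing $e$, changing its residue mod~$5$, and the two resulting pieces sit at opposite ends of $C^\star$ and may independently reverse sign. So the equation is no longer literally $\pm 2 \pm 1 \pm 1 \equiv (\text{virtual component})$; what it becomes depends on exactly where you cut, and for several natural choices (e.g., adjacent to the $2$-mod-$5$ segment) the signs are no longer uniquely forced. You would need to exhibit a specific cut --- the length-exactly-$1$ outer segment is the obvious candidate --- and redo the computation there. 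Even then you must also verify that the new unit gap between $u$ and $v$ cannot be threaded by the inner frame or by $5C$: Claim~\ref{claim:frame-loop} relies on the sole gap in the outer frame sitting at a location that misses the $5\times 5$ grid, and your cut introduces a second gap whose position relative to that grid you have not controlled. None of this looks fatal, but it is genuine work you have not carried out, whereas the paper's long-segment trick sidesteps all of it with a single inequality.
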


\begin{proof}
We use the same reduction in the proof of Theorem~\ref{th:loop},
except for the frame gadget, which we replace with \figurename~\ref{fig:frame}.
The differences are that the bottom doubled segment extends very far
to the left --- more than 10 times the total length $L$
of the given scaled chain~$5 C$ --- and the chain is no longer
closed at the left end of the bottom doubled segment.
The leftmost two vertices of the bottom doubled segment
(the endpoints of the chain) are~$\Red$,
while all other vertices in the chain are~$\Blue$.

\begin{figure}[ht]\centering
\includegraphics[width=0.9\linewidth]{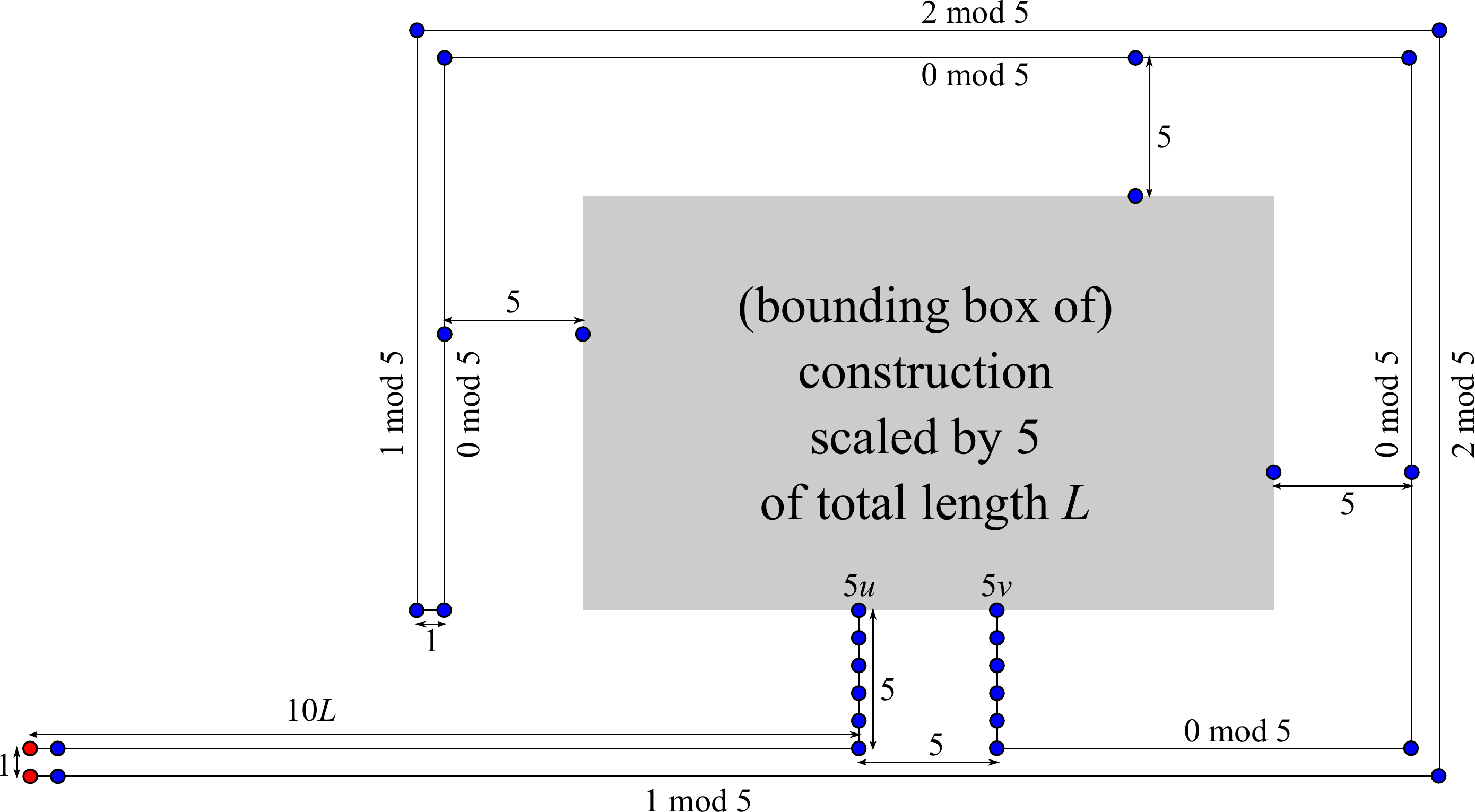}
\caption{A frame gadget for an HP chain.
  The two $\Red$ vertices are drawn red at the far left.}
\label{fig:frame}
\end{figure}

This reduction can be done in polynomial time.
Thus it suffices to show that this arrangement of the frame is the only way
to obtain the H--H contact at the two $\Red$ vertices.
Because the total length of the given construction inside of the frame
is at most~$L$, and the length of each segment of the frame is therefore
at most~$L$,
the total length of the chain except the bottom doubled segment
is at most $9 L$.
Hence to make the H--H contact between the two $\Red$ vertices,
the two long segments attached to the $\Red$ vertices must be arranged
in parallel as shown in \figurename~\ref{fig:frame}:
if the two long segments went in opposite directions, or were perpendicular
to each other, then the rest of the chain would not be long enough
to connect their ends together.
Thus the frame construction is forced to act like the closed chain of
\figurename~\ref{fig:frame-loop}, so it has a unique folding up to isometry
by the rest of the proof of Lemma~\ref{claim:frame-loop}.
\end{proof}

\section{Packing Fixed-Angle Orthogonal Equilateral Open Chains into Squares is Strongly NP-complete}

We now address some of the open questions from \cite{AbelDemaineDemaineEisenstatLynchSchardl2013}.
First, the authors ask whether a fixed-angle orthogonal equilateral open chain
(or in their terminology, an S--T sequence of squares, where each S square
must continue straight and each T square must turn left or right)
can be packed into a 2D square.
Second, they ask whether the problem remains hard when the chain occupies
a small fraction of the volume of the target shape.
(They ask this question for the 3D version of the problem, but it naturally
extends to the 2D version we consider.)
We answer both questions by showing that packing a fixed-angle orthogonal
equilateral open chain of length $O(s)$ into an $s \times s$ square is
strongly NP-complete.
This result is tight up to constant factors: if the chain has length $< s$, then
it can be packed into an $s \times s$ square via Observation~\ref{obs:chain}.

\begin{theorem}\label{th:square}
  Embedding a given fixed-angle orthogonal equilateral open chain into an
  $s \times s$ square is strongly NP-complete,
  even if the chain has length $O(s)$.
\end{theorem}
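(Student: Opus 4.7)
The plan is to reduce from the closed chain embedding problem of Theorem~\ref{th:loop}, producing an open chain of length $O(s)$ whose packing into an $s \times s$ square is equivalent to the satisfiability of the underlying SAT instance. Membership in NP follows from the enumeration of turning choices in Section~\ref{sec:NP}.

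First I would adapt the spiral construction of Theorem~\ref{th:loop} so that its bounding box is a square of side~$s$, then convert the closed chain to an open chain by removing one edge of the frame gadget. The two resulting endpoints must be pinned by a structural mechanism analogous to the HP frame of Theorem~\ref{th:chain}: I would attach two long perpendicular straight segments (each of length close to $s$) to the endpoints, which are forced to orient horizontally and vertically respectively in order to fit within the $s \times s$ square. This pins the endpoints and forces the remaining chain to behave as if it were closed. Correctness would then follow by rigidity arguments analogous to Claims~\ref{claim:frame-loop}--\ref{claim:variable}: the square bounding box forces the replacement frame, the frame forces the spiral, the spiral forces the insulation, and the insulation forces each variable and clause gadget to encode SAT correctly.

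The hard part will be achieving the $O(s)$ chain length bound. A direct use of Theorem~\ref{th:loop}'s spiral produces chain length that scales like the area of the bounding box, roughly $\Theta(s^2)$, because insulation gadgets effectively fill the box with chain segments (each of $\Theta(n)$ insulation rows contributes length proportional to its area $hW$, summing to $\Theta(n\, h\, W)$ which is super-linear in $s = \max(W, nh)$). To reduce this to $O(s)$, I would restructure so that the total insulation length is linear in~$s$: for example, group all clause rows together and all variable rows together, separated by only a constant number of ``super-insulation'' rows. This requires redesigning the variable and clause gadgets to handle many neighbors simultaneously within a single row, using the linked planar 3SAT ordering to place them linearly, and proving new rigidity claims to rule out the unintended configurations that a weaker separation might permit. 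An alternative would be to abandon the spiral entirely in favor of a thin-strip layout of width $O(1)$ and height $O(s)$, in which the chain zig-zags through variables and clauses in the order given by the Hamiltonian cycle of linked planar 3SAT, using the bounding-box boundary itself in place of thick insulation; this again demands a fresh rigidity analysis tailored to the compact geometry.
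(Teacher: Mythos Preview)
Your pinning idea (two long perpendicular segments forced along the square's sides) is exactly what the paper does, but you have inverted the relationship between $s$ and the construction, and this is where your proposal goes wrong. You assume the $s \times s$ square must coincide with the bounding box of the spiral construction, and then you correctly observe that the insulation gadgets make the chain length scale like the area, $\Theta(s^2)$. From there you propose heavy redesigns (constant number of super-insulation rows, thin-strip layouts, new rigidity claims) to bring the length down to $O(s)$.

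None of that is needed, because the paper does the opposite: it leaves the construction untouched and instead \emph{inflates the square}. Let $L$ be the total length of the (scaled) inner construction $5C$ from Theorem~\ref{th:loop}; the paper sets $s = 10L+1$. The two initial length-$s$ segments are forced along two sides of the square, and a final segment of length $s-1$ plus the usual mod-$5$ parity argument pin the small frame and the inner construction into one corner, with the vast majority of the square left empty. The total chain length is at most $48L = O(s)$ automatically, with no change to any insulation, spiral, variable, or clause gadget and no new rigidity analysis. Your ``hard part'' disappears once you realize $s$ is a parameter you get to choose, not a quantity imposed by the existing bounding box.
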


\begin{proof}
We use the same reduction in the proof of Theorem~\ref{th:chain},
except for the frame gadget, which we replace with \figurename~\ref{fig:frame_square}.

\begin{figure}[ht]
    \centering 
    \includegraphics[width=0.9\linewidth]{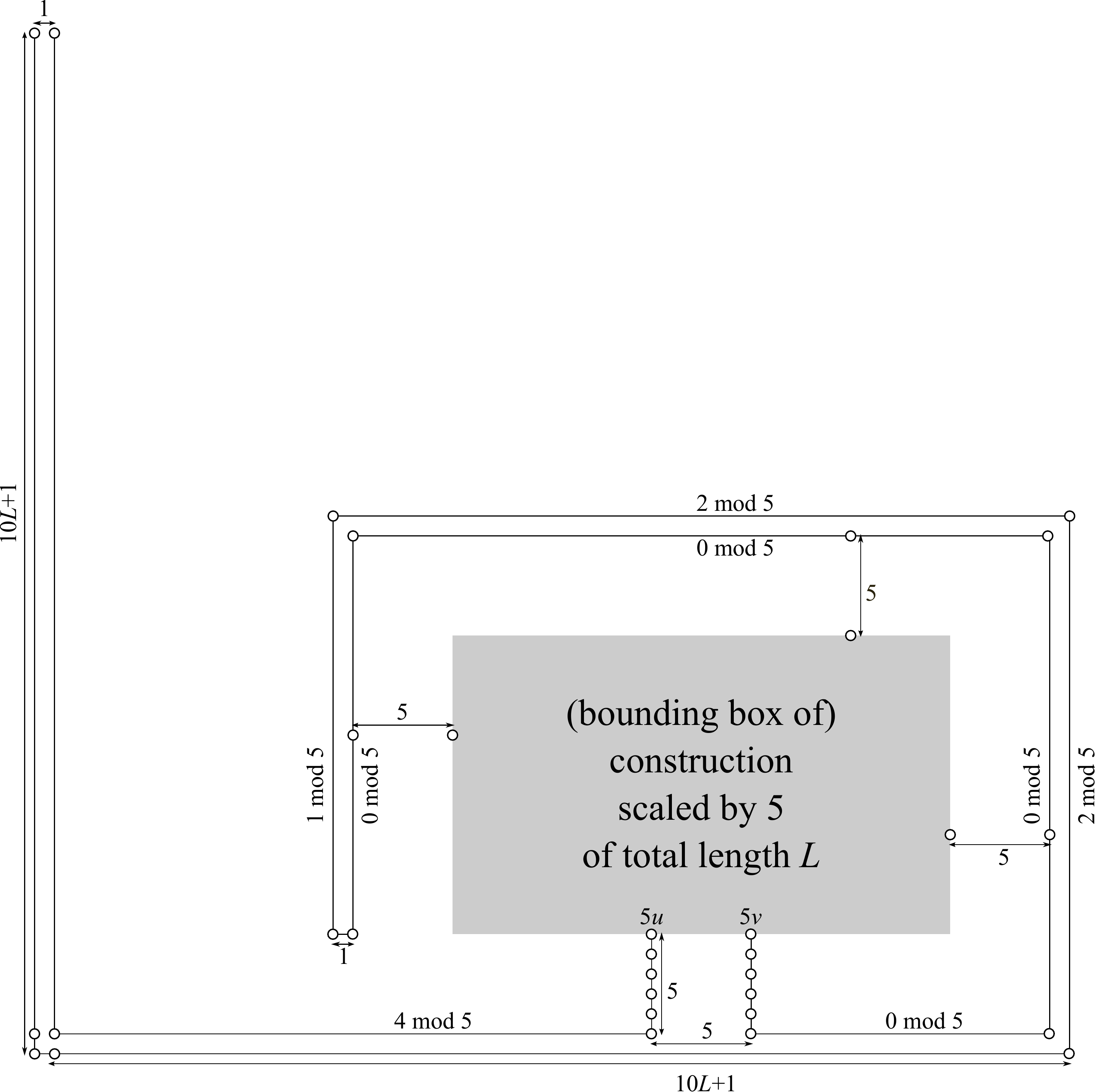}
    \caption{A frame gadget for an open chain which must fit in a $10L+1$ by $10L+1$ square.}
    \label{fig:frame_square}
\end{figure}

This frame gadget starts the chain with two straight segments of length
$s = 10 L + 1$.
Any embedding into the $s \times s$ square must place these segments along
two boundary edges of the square, say left and bottom as in the figure.
The next two segments on the outside of the frame gadget
must turn left to remain within the square.
At the other end of the chain, we have a vertical (by parity) segment
of length $s-1$, which forces its endpoints to be at the very top and
one position up from the bottom (to avoid overlapping the second segment).

Now we make a parity argument modulo~$5$.
Because the third segment goes up by $2$ modulo~$5$,
and the only other vertical travel modulo $5$ is by the fifth segment
which goes up or down by $\pm 1$,
the fifth segment must in fact go down by $1$ modulo~$5$
to reach the position one up from the bottom.

Because the next-to-last segment has length $>9L$,
its right end must be inside the frame.
Thus the sixth segment of length $1$ must go right to stay inside the frame;
otherwise, it could never connect to the right end of the next-to-last segment.
The rest of the segments are then forced to turn as in the figure
in order to avoid collisions.

The chain has length at most $48L$ (from the given chain of length $L$, the smaller frame, and the three long bars). Thus the length is $O(s)$.
\end{proof}

It remains open whether the problem of \defn{densely} packing
a fixed-angle orthogonal equilateral open chain of length $s^2$
into an $s \times s$ square is NP-complete, meaning that the chain
covers \emph{all} grid points in the square.
The analogous problem in 3D is strongly NP-complete \cite{AbelDemaineDemaineEisenstatLynchSchardl2013}.

\section*{Acknowledgments}
This work was initiated at the 3rd Virtual Workshop on Computational Geometry held in March 2022.
We thank the other participants of that workshop --- in particular Martin Demaine, David Eppstein, Timothy Gomez, and Aaron Williams ---
for helpful discussions and for providing a fruitful collaborative environment.

\bibliography{main}

\end{document}